\newtheorem{definition}{Definition}
\newtheorem{proposition}[definition]{Proposition}
\newtheorem{lemma}[definition]{Lemma}
\newtheorem{theorem}[definition]{Theorem}
\newtheorem{corollary}[definition]{Corollary}
\newtheorem{conjecture}[definition]{Conjecture}
\newtheorem{remark}[definition]{Remark}
\newtheorem{example}[definition]{Example}
\newtheorem{question}[definition]{Question}
\def\squareforqed{\hbox{\rlap{$\sqcap$}$\sqcup$}}
\def\qed{\ifmmode\squareforqed\else{\unskip\nobreak\hfil
\penalty50\hskip1em\null\nobreak\hfil\squareforqed
\parfillskip=0pt\finalhyphendemerits=0\endgraf}\fi}
\def\endenv{\ifmmode\;\else{\unskip\nobreak\hfil
\penalty50\hskip1em\null\nobreak\hfil\;
\parfillskip=0pt\finalhyphendemerits=0\endgraf}\fi}
\newenvironment{proof}{\noindent \textbf{{Proof.~} }}{\qed}
\def\Dbar{\leavevmode\lower.6ex\hbox to 0pt
{\hskip-.23ex\accent"16\hss}D}
\def\url@leostyle{%
  \@ifundefined{selectfont}{\def\UrlFont{\sf}}{\def\UrlFont{\small\ttfamily}}}
\def\bcj{\begin{conjecture}}
\def\ecj{\end{conjecture}}
\def\bcr{\begin{corollary}}
\def\ecr{\end{corollary}}
\def\bd{\begin{definition}}
\def\ed{\end{definition}}
\def\bea{\begin{eqnarray}}
\def\eea{\end{eqnarray}}
\def\bem{\begin{enumerate}}
\def\eem{\end{enumerate}}
\def\bex{\begin{example}}
\def\eex{\end{example}}
\def\bim{\begin{itemize}}
\def\eim{\end{itemize}}
\def\bl{\begin{lemma}}
\def\el{\end{lemma}}
\def\bma{\begin{bmatrix}}
\def\ema{\end{bmatrix}}
\def\bpf{\begin{proof}}
\def\epf{\end{proof}}
\def\bpp{\begin{proposition}}
\def\epp{\end{proposition}}
\def\bqu{\begin{question}}
\def\equ{\end{question}}
\def\br{\begin{remark}}
\def\er{\end{remark}}
\def\bt{\begin{theorem}}
\def\et{\end{theorem}}
\def\btb{\begin{tabular}}
\def\etb{\end{tabular}}
\newcommand{\nc}{\newcommand}
\def\a{\alpha}
\def\b{\beta}
\def\g{\gamma}
\def\d{\delta}
\def\l{\lambda}
\def\m{\mu}
\def\n{\nu}
\def\p{\pi}
\def\r{\rho}
\def\s{\sigma}
\def\ph{\varphi}
\def\ps{\psi}
 \nc{\bbA}{\mathbb{A}} \nc{\bbB}{\mathbb{B}} \nc{\bbC}{\mathbb{C}}
 \nc{\bbD}{\mathbb{D}} \nc{\bbE}{\mathbb{E}} \nc{\bbF}{\mathbb{F}}
 \nc{\bbG}{\mathbb{G}} \nc{\bbH}{\mathbb{H}} \nc{\bbI}{\mathbb{I}}
 \nc{\bbJ}{\mathbb{J}} \nc{\bbK}{\mathbb{K}} \nc{\bbL}{\mathbb{L}}
 \nc{\bbM}{\mathbb{M}} \nc{\bbN}{\mathbb{N}} \nc{\bbO}{\mathbb{O}}
 \nc{\bbP}{\mathbb{P}} \nc{\bbQ}{\mathbb{Q}} \nc{\bbR}{\mathbb{R}}
 \nc{\bbS}{\mathbb{S}} \nc{\bbT}{\mathbb{T}} \nc{\bbU}{\mathbb{U}}
 \nc{\bbV}{\mathbb{V}} \nc{\bbW}{\mathbb{W}} \nc{\bbX}{\mathbb{X}}
 \nc{\bbZ}{\mathbb{Z}}
 \nc{\bA}{{\bf A}} \nc{\bB}{{\bf B}} \nc{\bC}{{\bf C}}
 \nc{\bD}{{\bf D}} \nc{\bE}{{\bf E}} \nc{\bF}{{\bf F}}
 \nc{\bG}{{\bf G}} \nc{\bH}{{\bf H}} \nc{\bI}{{\bf I}}
 \nc{\bJ}{{\bf J}} \nc{\bK}{{\bf K}} \nc{\bL}{{\bf L}}
 \nc{\bM}{{\bf M}} \nc{\bN}{{\bf N}} \nc{\bO}{{\bf O}}
 \nc{\bP}{{\bf P}} \nc{\bQ}{{\bf Q}} \nc{\bR}{{\bf R}}
 \nc{\bS}{{\bf S}} \nc{\bT}{{\bf T}} \nc{\bU}{{\bf U}}
 \nc{\bV}{{\bf V}} \nc{\bW}{{\bf W}} \nc{\bX}{{\bf X}}
 \nc{\bZ}{{\bf Z}}
\nc{\cA}{{\cal A}} \nc{\cB}{{\cal B}} \nc{\cC}{{\cal C}}
\nc{\cD}{{\cal D}} \nc{\cE}{{\cal E}} \nc{\cF}{{\cal F}}
\nc{\cG}{{\cal G}} \nc{\cH}{{\cal H}} \nc{\cI}{{\cal I}}
\nc{\cJ}{{\cal J}} \nc{\cK}{{\cal K}} \nc{\cL}{{\cal L}}
\nc{\cM}{{\cal M}} \nc{\cN}{{\cal N}} \nc{\cO}{{\cal O}}
\nc{\cP}{{\cal P}} \nc{\cQ}{{\cal Q}} \nc{\cR}{{\cal R}}
\nc{\cS}{{\cal S}} \nc{\cT}{{\cal T}} \nc{\cU}{{\cal U}}
\nc{\cV}{{\cal V}} \nc{\cW}{{\cal W}} \nc{\cX}{{\cal X}}
\nc{\cZ}{{\cal Z}}
\nc{\hA}{{\hat{A}}} \nc{\hB}{{\hat{B}}} \nc{\hC}{{\hat{C}}}
\nc{\hD}{{\hat{D}}} \nc{\hE}{{\hat{E}}} \nc{\hF}{{\hat{F}}}
\nc{\hG}{{\hat{G}}} \nc{\hH}{{\hat{H}}} \nc{\hI}{{\hat{I}}}
\nc{\hJ}{{\hat{J}}} \nc{\hK}{{\hat{K}}} \nc{\hL}{{\hat{L}}}
\nc{\hM}{{\hat{M}}} \nc{\hN}{{\hat{N}}} \nc{\hO}{{\hat{O}}}
\nc{\hP}{{\hat{P}}} \nc{\hR}{{\hat{R}}} \nc{\hS}{{\hat{S}}}
\nc{\hT}{{\hat{T}}} \nc{\hU}{{\hat{U}}} \nc{\hV}{{\hat{V}}}
\nc{\hW}{{\hat{W}}} \nc{\hX}{{\hat{X}}} \nc{\hZ}{{\hat{Z}}}
\nc{\hn}{{\hat{n}}}
\def\diag{\mathop{\rm diag}}
\def\dim{\mathop{\rm Dim}}
\def\max{\mathop{\rm max}}
\def\es{\emptyset}
\def\lra{\leftrightarrow}
\def\ox{\otimes}
\def\ra{\rightarrow}
\def\Ra{\Rightarrow}
\newcommand{\pp}[2]{{\partial #1\over\partial #2}}
\newcommand{\bra}[1]{\langle#1|}
\newcommand{\ket}[1]{|#1\rangle}
\newcommand{\proj}[1]{| #1\rangle\!\langle #1 |}
\newcommand{\braket}[2]{\langle#1|#2\rangle}
\newcommand{\tbc}{\red{TO BE CONTINUED...}}
\newcommand{\opp}{\red{OPEN PROBLEMS}.~}
\newcommand{\red}{\textcolor{red}}
\def\Dbar{\leavevmode\lower.6ex\hbox to 0pt
{\hskip-.23ex\accent"16\hss}D}
\begin{document}
\title{Constructing $2\times2\times4$ and $4\times4$ unextendible product bases and positive-partial-transpose entangled states}

\author{Lin Chen}\email[]{linchen@buaa.edu.cn (corresponding author)}
\affiliation{School of Mathematics and Systems Science, Beihang University, Beijing 100191, China}
\affiliation{International Research Institute for Multidisciplinary Science, Beihang University, Beijing 100191, China}

\author{Kai Wang}\email[]
{kaywong@buaa.edu.cn}
\affiliation{School of Mathematics and Systems Science, Beihang University, Beijing 100191, China}

\author{Yi Shen}
\affiliation{School of Mathematics and Systems Science, Beihang University, Beijing 100191, China}

\author{Yize Sun}
\affiliation{School of Mathematics and Systems Science, Beihang University, Beijing 100191, China}

\author{Lijun Zhao}
\affiliation{School of Mathematics and Systems Science, Beihang University, Beijing 100191, China}

\date{\today}

\pacs{03.65.Ud, 03.67.Mn}

\begin{abstract}
The 4-qubit unextendible product basis (UPB) has been recently studied by [Johnston, J. Phys. A: Math. Theor. 47 (2014) 424034]. From this result we show that there is only one UPB of size $6$ and six UPBs of size $9$ in $\cH=\bbC^2\ox\bbC^2\ox\bbC^4$,  
three UPBs of size $9$ in $\cK=\bbC^4\ox\bbC^4$, and no UPB of size $7$ in $\cH$ and $\cK$. Furthermore we construct a 4-qubit positive-partial-transpose (PPT) entangled state $\r$ of rank seven, and show that it is also a PPT entangled state in $\cH$ and $\cK$, respectively. We analytically derive the geometric measure of entanglement of a special $\r$.
\end{abstract}

\maketitle


\section{Introduction}
\label{sec:int}

In quantum information, the unextendible product bases (UPBs) have been found useful in various applications, such as the nonlocality without entanglement, the construction of positive-partial-transpose (PPT) entangled states and Bell inequalities 
without quantum violation \cite{AL01,Fen06,Chen2013The, dms03, Tura2012Four,Pittenger2003Unextendible, Terhal2012A, Szanto2016Complementary, Chen2014Unextendible,Hou2014A, Sollid2011Unextendible, Augusiak2012Tight, Dicarlo2010Preparation}. In particular, the multiqubit UPB have received much attentions \cite{Bravyi2004Unextendible,Reed2012Realization,Joh13,Johnston2014The, Chen2018Nonexistence}, since the multiqubit system is the mostly realizable system in experiments. First, 3-qubit UPBs have been constructed by \cite{Bravyi2004Unextendible}.
Second the 4-qubit UPBs have been fully classified assisted by programs \cite{Johnston2014The}, and recently the 4-qubit orthogonal basis has also been classified using a combinatorial idea \cite{1751-8121-50-39-395301}. Third, the $n$-qubit UPBs of cardinality $2^n-5$ has been proven non-existing \cite{Chen2018Nonexistence}, and it has solved an open problem in \cite{Johnston2014The}. The multiqubit UPBs have also been studied in terms of the formally orthogonal matrices and Hasse diagrams \cite{Chen2018Multiqubit, Chen2018The}. Nevertheless, so far the connection between multiqubit UPBs and multipartite UPBs in higher dimensions has been little studied. Understanding the connection helps construct more UPBs systematically using the known UPBs. This is the main motivation of this paper.

In this paper we apply the result of 4-qubit UPBs \cite{Johnston2014The} to construct UPBs in $\bbC^2\ox\bbC^2\ox\bbC^4$ and $\bbC^4\ox\bbC^4$. From this result we construct the UPBs in $\cH=\bbC^4\ox\bbC^2\ox\bbC^2$ and $\cK=\bbC^4\ox\bbC^4$ of size $6,7,8$ and $9$. We show that there is only one UPB of size $6$ in $\cH$, six UPBs of size $9$ in $\cH$ and three UPBs of size $9$ in $\cK$. To obtain our results on UPBs of size $9$, we shall review the so-called unextendible orthogonal matrices (UOM) that was firstly used for multiqubit UPBs in \cite{Chen2018Multiqubit}. We shall further construct the 4-qubit PPT entangled state $\r$ of rank seven in \eqref{eq:rho}. This is realized by The state constructed from the UPBs in \eqref{eq:size9,family11-1}. Then we investigate the geometric measure of entanglement of a special $\r$ in Theorem \ref{thm:gme}. As far as we know, such a state have been little studied due to the mathematical difficulty. Using \eqref{eq:rho}, we shall show that the state $\r$ is also a  PPT entangled state of rank seven in $\cH$ and $\cK$. Theorem \ref{thm:gme} also gives an upper bound of the geometric measure of entanglement of both the states in $\cH$ and $\cK$.

The rest of this paper is structured in the following way. In Sec. \ref{sec:pre} we introduce the notions and facts on the UPBs and the coarse graining. We investigate the coarse graining of 4-qubit UPBs of size $6,7$ in Sec. \ref{sec:size67}, and that of $9$ in Sec. \ref{sec:size9}, respectively. We present the application of our result in Sec. \ref{sec:ent}. We construct the 4-qubit PPT entangled state $\r$ of rank seven, and analytically derive the geometric measure of entanglement of a special $\r$. Finally we conclude in Sec. \ref{sec:con}.

\section{Preliminaries}
\label{sec:pre}

In this section we introduce the notions and facts used in this paper. First we review the notion of UPBs, introduce the two properties and equivalence of 
UPBs in Sec. \ref{subsec:qi}. Second we introduce the coarse graining in Sec. \ref{subsec:coarse}.

\subsection{unextendible product basis}
\label{subsec:qi}

In quantum mechanics, an $n$-partite quantum system of $A_1,A_2,...,A_n$ is characterized by an $n$-partite Hilbert space $\cH=\cH_1\ox\cdots\ox\cH_n$. We refer to the quantum state $\ket{\ps_i}\in\cH_i$ as a $\dim \cH_i$-dimensional vector.
The {\em product vector in $\cH$} is an $n$-partite nonzero vector of the form
$\ket{\psi_1}\ox\cdots\ox\ket{\psi_n}$. For simplicity it is written as $\ket{\psi_1,\ldots,\psi_n}$. For the convenience in mathematical arguments,
We do not distinguish product vectors that
are scalar multiples of each other. We say that a set of $n$-partite orthonormal product vectors $\{\ket{a_{i,1}},...,\ket{a_{i,n}}\}$ is an unextendible product basis (UPB) in $\cH$ if there is no $n$-partite product vector orthogonal to all vectors in the set. For example any orthonormal basis in $\cH$ is a UPB. It is trivial because its size $n=\dim\cH$. So we only consider UPBs with size smaller than $\dim\cH$. We shall refer to $\cH$ as an $n$-qubit space when $\dim\cH_i=2$ for all $j$. We will study 4-qubit space, and we refer to $A_1,A_2,A_3,A_4$ as $A,B,C,D$. Hence $\bbC^2\ox\bbC^2\ox\bbC^2\ox\bbC^2:= \cH_A\ox\cH_B\ox\cH_C\ox\cH_D$, and $\cH_{AB}:=\cH_A\ox\cH_B$,  etc. We refer to the vectors $\ket{0}:=\bma1\\0\ema,\ket{1}=\bma0\\1\ema$ as an orthonormal basis in $\bbC^2$. More generally we denote $\ket{x},\ket{x'}$ as another orthonormal basis in $\bbC^2$, where we may choose $x=a,b,c$, etc. The following two properties are clear from the definition of UPB.
\begin{itemize}
\item 
If $\{\ket{a_{i,1},...,a_{i,n}}\}_{i=1,...,s}$ is a UPB of size $s$ then so is $\{\ket{a_{i,\s(1)},...,a_{i,\s(n)}}\}_{i=1,...,s}$, where $\s$ is a permutation of the integers $1,2,...,n$. That is, if we switch arbitrarily the systems of a UPB then we obtain another UPB.
\item
If $\{\ket{a_{i,1},...,a_{i,n}}\}_{i=1,...,s}$ is a UPB of size $s$ then so is $\{U_1\ket{a_{i,1}}\ox...\ox U_n\ket{a_{i,n}}\}_{i=1,...,s}$, where $U_1,...,U_n$ are arbitrary unitary matrices. That is, performing any product unitary transformation $U_1\ox...\ox U_n$ on a UPB still produces a UPB. 
\end{itemize}
We shall use the above two properties throughout the paper. If we obtain a UPB from another by using the above two properties, then we say that the two UPBs are equivalent.

\subsection{coarse graining}
\label{subsec:coarse}

We say that the $m$-partite Hilbert space $\cH'$ is an $m$-partite coarse graining of $\cH$ if $\cH'=\cH'_1\ox ...\ox\cH'_m$, $m\ge2$ and $\cH'_j=\ox_{k\in S_j}\cH_k$ where $\cup^m_{j=1} S_j=\{1,...,n\}$ and $S_j\cap S_k=\es$, $\forall j,k$. So $m\le n$, and the $n$-partite $\cH'$ is exactly $\cH$. One can similarly define the coarse graining of a set of product vectors in $\cH$. That is, suppose the set of product vectors $\{\ket{a_{i,1},...,a_{i,n}}\}\in\cH$, $\ket{a_{i,1},...,a_{i,n}}=\ket{b_{i,1},...,b_{i,m}}$ and $\ket{b_{i,j}}\in\cH_j'$ for $j=1,..,m$. Then the set $\{\ket{b_{i,1},...,b_{i,m}}\}$ consists of product vectors 
$\ket{b_{i,1},...,b_{i,m}}\in\cH'$. The set is defined as a coarse graining of the set $
\{\ket{a_{i,1},...,a_{i,n}}\}$.
We present the following claim.
\begin{lemma}
\label{le:n,n-1}
If a set of $n$-partite product vectors is a UPB in some coarse graining of the space $\cH$, then the set is a UPB in $\cH$.
\end{lemma}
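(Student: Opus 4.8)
The plan is to argue by contradiction, exploiting the fact that a coarse graining is merely a regrouping of the tensor factors and therefore changes neither the underlying Hilbert space nor its inner product. Throughout, write $\cH'=\cH'_1\ox\cdots\ox\cH'_m$ with $\cH'_j=\ox_{k\in S_j}\cH_k$, $\ket{a_{i,1},\ldots,a_{i,n}}=\ket{b_{i,1},\ldots,b_{i,m}}$, and $\ket{b_{i,j}}=\ox_{k\in S_j}\ket{a_{i,k}}\in\cH'_j$, as in the definition of coarse graining.

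First I would dispatch the easy half. Since $\{\ket{b_{i,1},\ldots,b_{i,m}}\}$ is a UPB in $\cH'$, its members are orthonormal $m$-partite product vectors. Under the identification $\cH'\cong\cH$ the inner product is unchanged, so $\{\ket{a_{i,1},\ldots,a_{i,n}}\}$ is an orthonormal set, and by hypothesis each of its elements is an $n$-partite product vector. Hence the set already satisfies every requirement of a UPB in $\cH$ except possibly unextendibility.

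The substantive step is unextendibility, and the key observation is that coarse graining sends an $n$-partite product vector to an $m$-partite product vector. Suppose, for contradiction, that some $n$-partite product vector $\ket{\psi_1,\ldots,\psi_n}\in\cH$ is orthogonal to every $\ket{a_{i,1},\ldots,a_{i,n}}$. Setting $\ket{\phi_j}:=\ox_{k\in S_j}\ket{\psi_k}\in\cH'_j$ (a nonzero vector, since each $S_j$ is nonempty), we get $\ket{\psi_1,\ldots,\psi_n}=\ket{\phi_1,\ldots,\phi_m}$ as vectors of $\cH'$, and this is a genuine $m$-partite product vector there. Because the regrouping preserves inner products, $\ket{\phi_1,\ldots,\phi_m}$ is orthogonal to every $\ket{b_{i,1},\ldots,b_{i,m}}$, contradicting that $\{\ket{b_{i,1},\ldots,b_{i,m}}\}$ is a UPB in $\cH'$. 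Therefore no such $\ket{\psi_1,\ldots,\psi_n}$ exists, and the set is a UPB in $\cH$.

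I do not expect a genuine obstacle; the only point requiring care is the direction of the implication. Coarse graining maps fine-grained product vectors to coarse-grained product vectors but not conversely (an $m$-partite product vector in $\cH'$ need not be product across all $n$ original factors), so the statement is one-directional — and it is exactly this asymmetry that makes the contradiction go through, since a hypothetical extension in $\cH$ always descends to an extension in the coarser space $\cH'$.
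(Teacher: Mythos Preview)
Your argument is correct and is essentially the same contrapositive argument the paper gives: if the set were extendible in $\cH$ by an $n$-partite product vector, its regrouping would extend the set in $\cH'$, contradicting unextendibility there. You have simply spelled out in more detail the points the paper leaves implicit (preservation of inner products and the fact that an $n$-fold product vector remains a product vector after grouping factors).
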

\begin{proof}
If the set of orthogonal product vectors  $\cS=\{\ket{v_j}\}$ is not a UPB in $\cH$, then there exists a nonzero product vector $\ket{w}\in\cH$ such that $\braket {v_j}w =0$ for all $\ket{v_j}\in\cS$. By definition, the coarse graining of $\cS$ is still not a UPB. The converse is wrong, and we will see that some four-qubit UPB is no longer a UPB, and some others remain a UPB in the coarse graining $\bbC^4\ox\bbC^2\ox\bbC^2$.	
\end{proof}

Let $\cU_{2,2,2,2}$ be the set of 4-qubit UPBs. They have been fully characterized in \cite{Johnston2014The}. 
In this paper we investigate the UPBs in the coarse graining of 4-qubit UPBs.
First we define the subset $\cU_{2,2,4}\subseteq\cU_{2,2,2,2}$,   where any element of $\cU_{2,2,4}$ is a UPB in one of the six spaces $\cH_A\ox\cH_B\ox\cH_{CD}$, $\cH_A\ox\cH_C\ox\cH_{BD}$, $\cH_A\ox\cH_D\ox\cH_{BC}$, $\cH_B\ox\cH_C\ox\cH_{AD}$, $\cH_B\ox\cH_D\ox\cH_{AC}$, and $\cH_C\ox\cH_D\ox\cH_{AB}$. 
Second
one can similarly define the subset $\cU_{4,4}\subseteq\cU_{2,2,2,2}$, where any element of $\cU_{4,4}$ is a UPB in one of the three spaces $\cH_{AB}\ox\cH_{CD}$, $\cH_{AC}\ox\cH_{BD}$, and $\cH_{AD}\ox\cH_{BC}$. These definitions and Lemma \ref{le:n,n-1} imply the relation
\begin{eqnarray}
\label{eq:inclusion}
\cU_{2,2,2,2}\supseteq	
\cU_{2,2,4}\supseteq	
\cU_{4,4}.
\end{eqnarray} 
It is known that any 4-qubit UPB has size $s=6,7,8,9,10$ or $12$ \cite{Johnston2014The}. We can split the sets $\cU_{2,2,2,2}$, $\cU_{2,2,4}$ and $\cU_{4,4}$ into disjoint subsets $\cU_{2,2,2,2}^s$, $\cU_{2,2,4}^s$ and $\cU_{4,4}^s$ consisting of UPBs of size $s$, respectively. So we have
\begin{eqnarray}
\label{eq:u2222}
&&
\cU_{2,2,2,2}
=
\cup_{s=6,7,8,9,10,12}
\quad
\cU_{2,2,2,2}^s,
\\&&
\label{eq:u224}
\cU_{2,2,4}
=
\cup_{s=6,7,8,9,10,12}
\quad 
\cU_{2,2,4}^s,
\\&&
\label{eq:u44}
\cU_{4,4}
=
\cup_{s=6,7,8,9,10,12}
\quad 
\cU_{4,4}^s,
\\&&
\label{eq:inclusion=s}
\cU_{2,2,2,2}^s\supseteq	
\cU_{2,2,4}^s\supseteq	
\cU_{4,4}^s,
\quad\quad\quad
s=6,7,8,9,10,12.
\end{eqnarray}

For example, we shall show in Proposition \ref{pp:size6} that for $4$-qubit UPBs of size $6$ the first inclusion in \eqref{eq:inclusion=s} is strict, and there is no UPB in the coarse graining $\cH_{AB}\ox\cH_{CD}$, $\cH_{AC}\ox\cH_{BD}$, and $\cH_{AD}\ox\cH_{BC}$. That is $\cU_{2,2,2,2}^s\supset	
\cU_{2,2,4}^s$ and $	
\cU_{4,4}^s=\es$. In this paper we shall characterize the sets $\cU_{2,2,4}$, $\cU_{4,4}$, $\cU_{2,2,4}^s$, and $\cU_{4,4}^s$.

If we denote $\cS_{A:B:C:D}\in\cU^s_{2,2,2,2}$ as a $4$-qubit UPB of size $s$, then we denote $\cS_{A:B:CD}$, $\cS_{A:C:BD}$, $\cS_{A:D:BC}$, $\cS_{B:C:AD}$, $\cS_{B:D:AC}$ and $\cS_{C:D:AB}$ as the corresponding sets of product vectors in the above-mentioned six spaces $\cH_A\ox\cH_B\ox\cH_{CD}$, $\cH_A\ox\cH_C\ox\cH_{BD}$, $\cH_A\ox\cH_D\ox\cH_{BC}$, $\cH_B\ox\cH_C\ox\cH_{AD}$, $\cH_B\ox\cH_D\ox\cH_{AC}$, and $\cH_C\ox\cH_D\ox\cH_{AB}$, respectively.
We call them the coarse graining of $\cS_{A:B:C:D}$. One can similarly define the coarse graining $\cS_{AB:CD}$, $\cS_{AC:BD}$, and $\cS_{AD:BC}$. If $\cS_{A:B:CD},...,\cS_{AD:BC}$ are indeed UPBs of size $s$, then they are the elements in $\cU^s_{2,2,4}$ and $\cU^s_{4,4}$, respectively.
It follows from Lemma \ref{le:n,n-1} that if $\cS_{AB:CD}$ is a UPB then so are $\cS_{A:B:CD}$ and $\cS_{C:D:AB}$, though the converse may fail. That is
\begin{eqnarray}
\label{eq:abcd}
&&
\forall 
\cS_{A:B:C:D}\in\cU^s_{2,2,2,2},
\\&&
\cS_{AB:CD}\in \cU^s_{4,4}
\quad\Ra\quad	
\cS_{A:B:CD},\cS_{C:D:AB}\in \cU^s_{2,2,4},
\notag\\&&
\cS_{A:B:CD},\cS_{C:D:AB}\in \cU^s_{2,2,4}
\quad\not\Ra\quad
\cS_{AB:CD}\in \cU^s_{4,4}.	
\notag
\end{eqnarray}

On the other hand, it is known that any set of orthogonal product vectors of size smaller than $2d$ in $\bbC^d\ox\bbC^2$ is not a UPB \cite{BDM+99}. So 
\begin{lemma}
\label{le:space=82}	
Any $4$-qubit UPB is no longer a UPB in the coarse graining $\bbC^8\ox\bbC^2$.
\end{lemma}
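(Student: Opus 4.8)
The plan is to combine two ingredients that are already on the table: first, every $4$-qubit UPB has size $s\in\{6,7,8,9,10,12\}$ by \cite{Johnston2014The}, hence in particular $s\le 12$; second, by \cite{BDM+99} any set of fewer than $2d$ mutually orthogonal product vectors in $\bbC^d\ox\bbC^2$ fails to be a UPB. The lemma then follows by putting $d=8$.

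First I would check that coarse graining preserves the structure we care about. Let $\cS=\{\ket{a_{i,A},a_{i,B},a_{i,C},a_{i,D}}\}_{i=1}^{s}$ be a $4$-qubit UPB, and regroup three of the qubits, say $A,B,C$, into a single $\bbC^8$ factor. Writing $\ket{b_i}:=\ket{a_{i,A},a_{i,B},a_{i,C}}\in\bbC^8$, each vector of $\cS$ becomes $\ket{b_i}\ox\ket{a_{i,D}}$, which is still a product vector in $\bbC^8\ox\bbC^2$; and the vectors remain pairwise orthogonal because the inner product on a tensor product factorizes. Thus the coarse graining of $\cS$ is a set of $s$ orthogonal product vectors in $\bbC^8\ox\bbC^2$.

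Then I would simply apply the bound of \cite{BDM+99}: here $d=8$, so $2d=16$, while $s\le 12<16$. Hence the coarse graining of $\cS$ cannot be a UPB in $\bbC^8\ox\bbC^2$. Since the choice of which qubit is left as the $\bbC^2$ factor was arbitrary, the conclusion holds for every coarse graining of $\cS$ of the form $\bbC^8\ox\bbC^2$, which proves the lemma.

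There is no genuine obstacle in this argument; the only points requiring care are (i) invoking the correct size upper bound $s\le 12$ for $4$-qubit UPBs, which is exactly the input from \cite{Johnston2014The} recalled above, and (ii) noting that ``smaller than $2d$'' in the cited result is the strict inequality $s<2d$, which is satisfied since $12<16$. This also clarifies why the substantive cases studied in this paper are the coarse grainings $\bbC^2\ox\bbC^2\ox\bbC^4$ and $\bbC^4\ox\bbC^4$, where the analogous bound $2d$ is small enough not to rule everything out.
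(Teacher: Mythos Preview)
Your argument is correct and matches the paper's approach: the lemma is presented there as an immediate consequence of the cited result from \cite{BDM+99} with $d=8$. One small remark: you invoke the Johnston classification to get $s\le 12$, but in fact all that is needed is $s<16$, which already follows from the convention (stated earlier in the paper) that only nontrivial UPBs with size strictly less than $\dim\cH=16$ are considered; so the dependence on \cite{Johnston2014The} is not essential here.
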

So it suffices to consider only the coarse graining $\bbC^2\otimes\bbC^2\otimes\bbC^4$ and $\bbC^4\otimes\bbC^4$, as we shall do from the next section. As the final remark of this section, the following observation will be used in the proof of Lemma \ref{le:size9,family1-10}.

\begin{lemma}
\label{le:n-3of2*2*4}
Suppose that $\cS_{A:B:C:D}$=\{$\ket{f_1,g_1,h_1,i_1}$, $\ket{f_2,g_2,h_2,i_2}$,...,$\ket{f_n,g_n,h_n,i_n}$\} is a 4-qubit UPB of size n. Suppose $m$ is an integer satisfying $0\leq m\le n-4$. In the coarse graining $\bbC^2\ox\bbC^2\ox\bbC^4$, the set $\cS_{A:B:CD}$ is no longer a UPB if it satisfies the following two conditions. 

(i) There exists an integer $k\leq n-3-m$, such that $\ket{f_1}=\ket{f_2}=...=\ket{f_k}$ and $\ket{g_{k+1}}=...=\ket{g_{n-3-m}}$.

(ii) There are $m+1$ pairwise linearly dependent product vectors in the set $\{\ket{h_{n-2-m},i_{n-2-m}},...,\ket{h_n,i_n}\}$.

\end{lemma}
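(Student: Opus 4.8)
The plan is to produce a single nonzero product vector $\ket{a,b,c}=\ket{a}\ox\ket{b}\ox\ket{c}\in\bbC^2\ox\bbC^2\ox\bbC^4$ orthogonal to every vector of $\cS_{A:B:CD}=\{\ket{f_j}\ox\ket{g_j}\ox\ket{h_j,i_j}\}_{j=1}^n$; by the definition of a UPB this shows $\cS_{A:B:CD}$ is not a UPB. The organizing idea is to partition the index set $\{1,\ldots,n\}$ into the three consecutive blocks $B_1=\{1,\ldots,k\}$, $B_2=\{k+1,\ldots,n-3-m\}$ and $B_3=\{n-2-m,\ldots,n\}$ — this is a genuine partition since hypothesis (i) gives $k\le n-3-m$ and since $(n-3-m)+1=n-2-m$ — and to annihilate the three blocks using the three tensor factors $\ket{a}$, $\ket{b}$, $\ket{c}$ respectively.

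For $B_1$, I take $\ket{a}$ to be a nonzero vector of $\bbC^2$ orthogonal to $\ket{f_1}$; since $\ket{f_1}=\ldots=\ket{f_k}$ by (i), this gives $\braket{a}{f_j}=0$ for every $j\in B_1$. For $B_2$, I take $\ket{b}$ orthogonal to $\ket{g_{k+1}}$; since $\ket{g_{k+1}}=\ldots=\ket{g_{n-3-m}}$ by (i), this gives $\braket{b}{g_j}=0$ for every $j\in B_2$. (If $k=0$ then $B_1=\es$ and $\ket{a}$ is unconstrained; if $B_2=\es$ then $\ket{b}$ is unconstrained; neither causes trouble.) Hence every product vector of $\cS_{A:B:CD}$ indexed by $B_1\cup B_2$ is already orthogonal to $\ket{a,b,c}$, whatever $\ket{c}$ turns out to be.

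It remains to kill the $m+3$ vectors indexed by $B_3$, which I do by a dimension count in $\bbC^4$. The vectors $\ket{h_j,i_j}$ are product vectors, hence nonzero; by hypothesis (ii), $m+1$ of the $m+3$ vectors $\ket{h_{n-2-m},i_{n-2-m}},\ldots,\ket{h_n,i_n}$ are pairwise linearly dependent and therefore all lie in one common $1$-dimensional subspace, while the remaining two of them span at most a $2$-dimensional space. Consequently the span of $\{\ket{h_j,i_j}:j\in B_3\}$ has dimension at most $1+2=3<4$, so its orthogonal complement in $\bbC^4$ is nonzero. Choosing any nonzero $\ket{c}$ in that complement yields $\braket{c}{h_j,i_j}=0$ for every $j\in B_3$, and then $\ket{a,b,c}$ is orthogonal to all $n$ vectors of $\cS_{A:B:CD}$, as required.

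I do not anticipate a genuine obstacle here: once the index partition is set up, the argument is just a one-line dimension estimate in $\bbC^4$ combined with the fact that, in $\bbC^2$, orthogonality to a fixed nonzero vector pins a state down up to a scalar (which is harmless, since UPBs are considered only up to scalar multiples of their vectors). The only things needing a moment of care are the boundary cases $k=0$, $B_2=\es$, $m=0$ and $m=n-4$; for these one checks that $B_1,B_2,B_3$ remain well defined and still exhaust $\{1,\ldots,n\}$, using $m\le n-4$ to see $n-2-m\ge 2$ and $m+3\le n-1$.
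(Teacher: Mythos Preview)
Your proof is correct and follows exactly the same approach as the paper: partition the indices into the three blocks $\{1,\ldots,k\}$, $\{k+1,\ldots,n-3-m\}$, $\{n-2-m,\ldots,n\}$, kill the first two blocks with $\ket{f_1'}$ and $\ket{g_{k+1}'}$ respectively, and kill the third block by the dimension count showing the $m+3$ vectors $\ket{h_j,i_j}$ span at most a $3$-dimensional subspace of $\bbC^4$. The paper's proof is more terse (it omits the edge-case discussion and writes the orthogonal product vector directly as $\ket{f_1',g_{k+1}',\ph}$), but the substance is identical.
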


\begin{proof}
  The space spanned by $\ket{h_{n-2-m},i_{n-2-m}}$, $\ket{h_{n-1-m},i_{n-1-m}}$,...,$\ket{h_n,i_n}$ has dimension at most three.  That is, there exists a product vector $\ket{\ph}\in\bbC^4$ orthogonal to $\ket{h_{n-2-m},i_{n-2-m}}$, $\ket{h_{n-1-m},i_{n-1-m}}$,...,$\ket{h_n,i_n}$. Consequently, the product vector $\ket{f_1',g_{k+1}',\ph}$ is orthogonal to $\cS_{A:B:CD}$.
\end{proof}

\section{The coarse graining of $4$-qubit UPBs of size $6$ and $7$}
\label{sec:size67}

It is known that 4-qubit UPBs have size at least $6$. We begin with the simplest case, namely 4-qubit UPBs of size $6$. The following set $\cS_{A:B:C:D}$ is the only $4$-qubit UPB of size $6$ in $\cH_A\ox\cH_B\ox\cH_C\ox\cH_D$ \cite[Table 1]{Johnston2014The}.
\begin{eqnarray}
\label{eq:0000}
\ket{0,0,0,0},
\ket{0,a,a,1},
\ket{1,0,b,a},
\ket{1,a,b',b},
\ket{a,1,a',b'},
\ket{a',a',1,a'}.
\end{eqnarray}
Let $\cS_{A:B:CD}$ be a coarse graining of $\cS_{A:B:C:D}$, in the sense that $\cS_{A:B:CD}$ consists of product vectors in $\cH_A\ox\cH_B\ox\cH_{CD}$, where $\cH_{CD}=\bbC^4$. Although they are still orthogonal to each other,
one can show that $\cS_{A:B:CD}$ is not a UPB as follows. 
Let $\ket{\ps} \in{\cH_{CD}}$ be a nonzero vector
such that $\ket{\ps}$ is orthogonal to $\ket{b',b}$, $\ket{a',b'}$, and $\ket{1,a'}$. Then one can verify that $\ket{1,1,\ps}$ is orthogonal to $\cS_{A:B:CD}$.

Using the similar argument, one can show that none of the coarse graining $\cS_{A:BC:D}$, $\cS_{A:BD:C}$, $\cS_{AC:B:D}$, and $\cS_{AD:B:C}$ is a UPB.
For $\ps_1\in \cH_{BC}$ orthhogonal to $\ket{a,b'}, \ket{1,a'},$ and $ \ket{a',1}$, we have $\ket{1,\ps_1,a'}$ is orthogonal to $\cS_{A:BC:D}$. For $\ps_2\in \cH_{BD}$ orthhogonal to $\ket{a,b},\ket{1,b'},$ and $\ket{a',a'}$, we have $\ket{1,\ps_2,b'}$ is orthogonal to $\cS_{A:BD:C}$. For $\ps_3\in \cH_{AC}$ orthhogonal to $\ket{1,b'},\ket{a,a'},$ and $\ket{a',1}$, we have $\ket{\ps_3,1,0}$ is orthogonal to $\cS_{AC:B:D}$. For $\ps_4\in \cH_{AD}$ orthhogonal to $\ket{1,b},\ket{a,b'},$ and $\ket{a',a'}$, $\ket{\ps_4,1,a'}$ is orthogonal to $\cS_{AD:B:C}$.
We present more facts as follows.

\begin{proposition}
\label{pp:size6}
If $\cS_{A:B:C:D}$ is the 4-qubit UPB of size $6$, then

(i) $\cS_{AB:C:D}$ is a UPB. 

(ii) $\cS_{A:B:C:D}$ and $\cS_{AB:C:D}$ are the only two UPBs in all coarse graining of $\cH$. That is,
\begin{eqnarray}
\cU_{2,2,4}^6=\{\cS_{AB:C:D}\},
\quad
\cU_{4,4}^6=\es.	
\end{eqnarray}
\end{proposition}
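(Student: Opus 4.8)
The plan is to handle the two parts separately, using the explicit UPB of size $6$ displayed in \eqref{eq:0000} throughout. For part (i), I would exhibit a concrete obstruction to extendibility in the coarse graining $\cH_{AB}\ox\cH_C\ox\cH_D=\bbC^4\ox\bbC^2\ox\bbC^2$. Any candidate product vector orthogonal to $\cS_{AB:C:D}$ has the form $\ket{u,v,w}$ with $\ket{u}\in\bbC^4$, $\ket{v},\ket{w}\in\bbC^2$. First I would observe that since $\cH_C$ is a qubit, the six vectors $\ket{0},\ket{a},\ket{b},\ket{b'},\ket{a'},\ket{1}$ appearing in the third slot of \eqref{eq:0000} fall into at most a few one-dimensional families, and similarly for $\cH_D$; a vector $\ket{v}$ in $\bbC^2$ is orthogonal to $\ket{v_i}$ only when $\ket{v}$ is the (essentially unique) orthogonal complement of $\ket{v_i}$. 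The key step is to argue by cases on which of $\ket{v},\ket{w}$ can simultaneously annihilate the $C$- and $D$-components of several of the six members: whenever $\braket{v}{c_i}\ne 0$ and $\braket{w}{d_i}\ne 0$ for a given $i$, orthogonality forces $\braket{u}{(ab)_i}=0$, i.e. $\ket{u}$ must lie in the orthogonal complement of the corresponding $AB$-vector in $\bbC^4$. One then counts how many independent constraints this forces on $\ket{u}$; if for every choice of $\ket{v},\ket{w}$ at least four of the six $AB$-vectors survive and they span $\bbC^4$, no nonzero $\ket{u}$ exists, proving unextendibility. I expect this bookkeeping — checking that in each case the surviving $AB$-vectors span all of $\bbC^4$ — to be the main obstacle, though it is a finite verification over the handful of possibilities for the qubit vectors $\ket{v},\ket{w}$.

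For part (ii), the first inclusion $\cU_{2,2,2,2}^6\supseteq\cU_{2,2,4}^6$ together with uniqueness of the size-$6$ UPB from \cite{Johnston2014The} means every element of $\cU_{2,2,4}^6$ is a coarse graining of the single UPB \eqref{eq:0000}. There are, up to the permutation symmetry recorded in Section~\ref{subsec:qi}, essentially three types of tripartite coarse graining of a $4$-qubit set into $\bbC^2\ox\bbC^2\ox\bbC^4$ — grouping $CD$, $BC$ (equivalently $AD$, etc.), and the mixed groupings like $AC$ — and the running text immediately preceding the proposition already produces, for $\cS_{A:B:CD}$, $\cS_{A:BC:D}$, $\cS_{A:BD:C}$, $\cS_{AC:B:D}$, and $\cS_{AD:B:C}$, an explicit product vector orthogonal to the set, hence showing none of these is a UPB. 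So I would first invoke those computations. What remains is to rule out the groupings not yet treated, namely $\cS_{B:C:AD}$, $\cS_{B:D:AC}$, $\cS_{C:D:AB}$, and the $BC$-type ones with $A$ not isolated; by the system-permutation property each of these is equivalent to one already handled, or is handled by an analogous one-line construction of an orthogonal product vector (choose a nonzero vector in the relevant $\bbC^4$ factor orthogonal to the three distinct two-qubit components appearing there, and pad it with the two qubit components that are forced). This exhausts $\cU_{2,2,4}^6$ down to $\{\cS_{AB:C:D}\}$.

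Finally, for $\cU_{4,4}^6=\es$: by the inclusion \eqref{eq:inclusion=s} and part (ii), the only candidate for a member of $\cU_{4,4}^6$ would come from the unique size-$6$ UPB, and by \eqref{eq:abcd} a necessary condition for $\cS_{AB:CD}\in\cU^6_{4,4}$ is that both $\cS_{A:B:CD}$ and $\cS_{C:D:AB}\in\cU^6_{2,2,4}$. But part (ii) shows $\cS_{A:B:CD}\notin\cU^6_{2,2,4}$ — indeed we already exhibited $\ket{1,1,\ps}$ orthogonal to it — and the same vector, regrouped, is orthogonal to $\cS_{AB:CD}$ in $\bbC^4\ox\bbC^4$; the analogous statements for the $AC:BD$ and $AD:BC$ groupings follow either directly or by system permutation. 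Hence no coarse graining into $\bbC^4\ox\bbC^4$ is a UPB, so $\cU_{4,4}^6=\es$. The one genuinely new content is part (i); everything in part (ii) either reuses the displayed computations or follows from the permutation symmetry and the inclusions, so I would keep that portion terse.
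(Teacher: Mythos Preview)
Your plan is essentially the paper's, with two minor points to tighten.

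For (i), the paper avoids your case enumeration over $(\ket{v},\ket{w})$ by a single observation: any two of the six $C$-components $\ket{0},\ket{a},\ket{b},\ket{b'},\ket{a'},\ket{1}$ are linearly independent, so a qubit vector $\ket{v}$ is orthogonal to at most one of them; likewise $\ket{w}$ kills at most one $D$-component. Hence at least four of the six $AB$-vectors must be annihilated by $\ket{u}\in\bbC^4$, and one checks (once, not case by case) that any four of $\ket{0,0},\ket{0,a},\ket{1,0},\ket{1,a},\ket{a,1},\ket{a',a'}$ are linearly independent. This is exactly the bookkeeping you anticipate, just packaged without enumerating $(\ket{v},\ket{w})$-pairs.

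For (ii), your list of ``groupings not yet treated'' is a notational slip: $\cS_{B:C:AD}$, $\cS_{B:D:AC}$, $\cS_{C:D:AB}$ are not new partitions but the same ones already written as $\cS_{AD:B:C}$, $\cS_{AC:B:D}$, $\cS_{AB:C:D}$ with the $\bbC^4$ factor placed in a different slot. In particular $\cS_{C:D:AB}$ \emph{is} $\cS_{AB:C:D}$, the UPB from (i), not something to rule out. Once you identify these, the five displayed computations plus (i) already exhaust all six tripartite coarse grainings, and no further system-permutation argument is needed. Your handling of $\cU_{4,4}^6=\es$ via the contrapositive of Lemma~\ref{le:n,n-1} (equivalently \eqref{eq:abcd}) matches the paper. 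You should also add a sentence for the $\bbC^8\ox\bbC^2$ bipartitions, since the statement claims ``all coarse graining of $\cH$''; the paper dispatches these by Lemma~\ref{le:space=82}.
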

\begin{proof} 
(i) Evidently, every $\ket{v}\in \cS_{AB:C:D}$ is a product vector and $\braket vw=0$ for all $\ket{v}\neq\ket{w}\in \cS_{AB:C:D}$. 
Suppose there exists a nonzero product vector $\ket{\ps,c,d}_{AB:C:D}$ orthogonal to $\cS_{AB:C:D}$. We note that any two of $\ket{0},\ket{1},\ket{a},\ket{a'},\ket{b},\ket{b'}$ are linearly independent. So $\ket{c,d}_{CD}$ is orthogonal to at most two vectors in $\cS_{AB:C:D}$. 
It implies that $\ket{\ps}_{AB}$ is orthogonal to at least four vectors in $\cS_{AB:C:D}$. It is a contradiction with the observation that any four of $\ket{0,0},\ket{0,a},\ket{1,0},\ket{1,a},\ket{a,1},\ket{a',a'}$ are linearly independent. 

(ii) It follows from (i) and the discussion below \eqref{eq:0000}  that $\cS_{AB:C:D}$ is the only UPB in all 3-partite coarse grainings of $\cH$. Next we consider $\cS$ in bipartite coarse grainings $\cH'$ of $\cH$. The latter is $\bbC^4\ox \bbC^4$ or $\bbC^8\ox\bbC^2$. If $\cH'=\bbC^4\ox \bbC^4$ then $\cS$  becomes $\cS_{AB:CD}$, $\cS_{AC:BD}$ or $\cS_{AD:BC}$. If $\cH'=\bbC^2\ox \bbC^8$ then $\cS$ becomes $\cS_{ABC:D}$, $\cS_{ABD:C}$, $\cS_{ACD:B}$ or $\cS_{BCD:A}$. None of them is a UPB in terms of Lemma \ref{le:n,n-1}, and the fact that $\cS_{AB:C:D}$ is the only UPB in all 3-partite coarse grainings of $\cH$.
\end{proof}

Using arguments similar to those for UPBs of size $6$, we investigate the UPBs of size $7$.
\begin{proposition}
\label{pp:size7}	
The $4$-qubit UPBs of size $7$ are the only UPBs in all coarse graining of $\cH$. That is,
\begin{eqnarray}
\label{eq:size7}	
\cU_{2,2,4}^7=\cU_{4,4}^7=\es.
\end{eqnarray}
\end{proposition}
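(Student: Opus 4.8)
The plan is to exhaust the list of $4$-qubit UPBs of size $7$ given in \cite{Johnston2014The} and show that every one of them fails to be a UPB in each $3$-partite coarse graining $\bbC^2\ox\bbC^2\ox\bbC^4$; by Lemma \ref{le:space=82} the remaining coarse grainings $\bbC^8\ox\bbC^2$ are automatically excluded, and by \eqref{eq:inclusion=s} (and the implication \eqref{eq:abcd}) the failure of all $3$-partite coarse grainings forces $\cU_{4,4}^7=\es$ as well. So the whole statement reduces to the claim $\cU_{2,2,4}^7=\es$, and for that it suffices to produce, for each UPB $\cS_{A:B:C:D}$ on Johnston's list and each of the six groupings (grouping two of the four qubits into the $\bbC^4$ factor), a nonzero product vector orthogonal to the coarsened set.

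First I would recall that a $4$-qubit UPB of size $7$, after any coarse graining, is a set of $7$ orthogonal product vectors in $\bbC^2\ox\bbC^2\ox\bbC^4$, a $16$-dimensional space; unextendibility there is a strong requirement, and the obstruction is typically easy to exhibit. The main tool is Lemma \ref{le:n-3of2*2*4}: if in some coarsening $\cS_{A:B:CD}$ one can find $k$ vectors sharing the same first-qubit component and $n-3-m-k$ further vectors sharing the same second-qubit component, while the last $m+3$ vectors have at most $m+1$ distinct (up to scalars) local vectors on the $\bbC^4$ side, then a suitable $\ket{f_1',g_{k+1}',\ph}$ annihilates the set. For $n=7$ the relevant cases are $m=0,1,2,3$. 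I would go through Johnston's Table for size $7$, apply the permutation and local-unitary freedom of UPBs (the two bulleted properties in Sec.~\ref{subsec:qi}) to put each UPB into a convenient normal form, and check that in at least one grouping the hypotheses of Lemma \ref{le:n-3of2*2*4} hold; when they do not apply directly, I would fall back on the ad hoc argument used for size $6$ below \eqref{eq:0000}, namely pick $\ket{\ps}\in\bbC^4$ orthogonal to the (at most three) distinct $\bbC^4$-components appearing with a fixed pair of qubit-values on the first two sites, and verify orthogonality to the whole set.

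The key structural fact making this work is a counting/pigeonhole observation: in a size-$7$ $4$-qubit UPB, after grouping two qubits, the three ``small'' (qubit) coordinates can take only limited configurations, so some qubit must repeat its local vector across enough members of the set that the remaining $\bbC^4$-components span a proper subspace of $\bbC^4$; this is exactly the situation Lemma \ref{le:n-3of2*2*4} converts into an orthogonal product vector. I would also use the known classification constraint that a $4$-qubit UPB of size $7$ cannot have ``too much'' local linear independence simultaneously on all four sites — this is implicit in Johnston's enumeration — to guarantee the pigeonhole step succeeds for every UPB on the list.

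The main obstacle I anticipate is purely bookkeeping: one must handle \emph{every} UPB in Johnston's size-$7$ classification and \emph{every} one of the six coarse grainings, and verify the orthogonality relations are preserved while the unextendibility is destroyed. There is no deep difficulty, but it is the kind of finite case analysis where a single overlooked family would break the proposition; so the real work is organizing the case split (by the repetition pattern of the local qubit vectors) so that Lemma \ref{le:n-3of2*2*4} or its size-$6$ analogue disposes of each case uniformly, rather than checking dozens of individual UPBs by hand.
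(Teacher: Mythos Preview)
Your strategy is the same as the paper's: prove $\cU_{2,2,4}^7=\es$ by exhibiting, for each of the six $\bbC^2\ox\bbC^2\ox\bbC^4$ groupings, an orthogonal product vector built from two qubit components chosen to kill enough rows and a $\bbC^4$ component orthogonal to the (at most three) remaining rows --- exactly the mechanism of Lemma~\ref{le:n-3of2*2*4} --- and then conclude $\cU_{4,4}^7=\es$ from \eqref{eq:inclusion=s}. The only point where you overestimate the work is the ``dozens of individual UPBs'' worry: Johnston's classification has a \emph{single} $4$-qubit UPB of size $7$ up to equivalence, so the paper simply writes it down and checks the six groupings directly (three via the $k=3$, $m=0$ instance of Lemma~\ref{le:n-3of2*2*4}, and three via the symmetric variant where each of the two single qubits has a repeated value of multiplicity two).
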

\begin{proof}
We shall prove $\cU_{2,2,4}^7=\es.$
Then \eqref{eq:inclusion=s} implies the claim. 

Take the set \{$\ket{0,0,0,0}$, $\ket{0,a,a,1}$, $\ket{0,a',1,a}$, $\ket{1,0,0,b}$, $\ket{1,a',a,b'}$, $\ket{a,a,1,0}$, $\ket{a',1,a',a'}$\}\cite[Table 1]{Johnston2014The} as $S_{A:B:C:D}$. We found that the vector $\ket{0}$ in the first qubit has multipicity three. Moreover, the space spanned by $\ket{a,b'}$, $\ket{1,0}$ and $\ket{a',a'}$ has diemnsion at most three. That is, there is $\ket{p}\in\bbC^4$ orthogonal to $\ket{a,b'}$, $\ket{1,0}$ and $\ket{a',a'}$. So $\ket{1,1,p}$ is orthogonal to $\cS_{A:B:CD}$. Similarly, the set $\cS_{A:C:BD}$ and $\cS_{A:D:BC}$ are no longer UPBs. For $\cS_{AB:C:D}$, there are $\ket{a}$ of multiplicity two in the third qubit and $\ket{0}$ of multiplicity two in the fourth qubit. Moreover, the two $\ket{a}$ and two $\ket{0}$ are in four distinct product vectors. So there exists $\ket{q}\in\bbC^4$ orthogonal to the last two qubits of the rest product vectors of $\cS_{A:B:C:D}$. So $\ket{q,a',1}$ is orthogonal to $\cS_{AB:C:D}$. Similarly, $\cS_{AC:B:D}$ and $\cS_{AD:B:C}$ are also no longer UPBs.
\end{proof}

\section{The coarse graining of $4$-qubit UPBs of size $9$}
\label{sec:size9}
In this section we investigate 
the coarse graining of $4$-qubit UPBs of size $9$. We show in Proposition \ref{pp:size9} that there are six UPBs of size $9$ in $\bbC^2\ox\bbC^2\ox\bbC^4$,  
and three UPBs of size $9$ in $\bbC^4\ox\bbC^4$. For this purpose we introduce the unextendible orthogonal matrices (UOM) \cite[p1]{Chen2018Multiqubit}. We take product vectors of an $n$-partite UPB of size $m$ as row vectors of an $m\times n$ matrix, so that the matrix is known as the UOM. For example, the three-qubit UPB $\ket{0,0,0},\ket{1,+,-},\ket{-,1,+}$ and $\ket{+,-,1}$ can be expressed as the UOM
\begin{eqnarray}
\bma
0&0&0\\
1&+&-\\
-&1&+\\
+&-&1\\
\ema.	
\end{eqnarray}
Using the orthogonality of product vectors in the UPB, we shall simply say that the rows of UOM are orthogonal. For orthogonal states $\ket{a}$ and $\ket{a'}$ we shall refer to them as $a$ and $a'$ in UOMs, and vice versa.

We start by presenting two preliminary lemmas. 

\begin{lemma}
\label{le:size9,family1-10}	
For the first ten UPBs of size $9$ in \cite[Table 1]{Johnston2014The}, neither of them is a UPB in the coarse graining $\bbC^2\ox\bbC^2\ox\bbC^4$ or $\bbC^4\ox\bbC^4$. 
\end{lemma}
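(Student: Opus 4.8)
The plan is to go through the ten UPBs of size $9$ listed in \cite[Table 1]{Johnston2014The} and, for each one, exhibit a product vector in the relevant coarse graining that is orthogonal to the whole set. Rather than treating each of the $10$ UPBs together with each of the $6$ possible $\bbC^2\ox\bbC^2\ox\bbC^4$ coarse grainings and each of the $3$ possible $\bbC^4\ox\bbC^4$ coarse grainings as a separate brute-force case, I would organize the argument around Lemma \ref{le:n-3of2*2*4}. Recall from \eqref{eq:inclusion=s} that $\cU_{4,4}^9\subseteq\cU_{2,2,4}^9$, so it suffices to show that none of these ten UPBs lies in $\cU_{2,2,4}^9$; that is, for each of the ten and each of the six three-partite coarse grainings $\cS_{A:B:CD},\dots,\cS_{C:D:AB}$, the coarse-grained set fails to be a UPB.

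First I would set up the bookkeeping: fix the labelling of the ten UPBs and, for each, record in which qubit each one-qubit ket appears with multiplicity $\ge2$ (the structure of Table 1 in \cite{Johnston2014The} guarantees such repetitions, since a size-$9$ UPB on four qubits cannot have all $36$ single-qubit kets pairwise independent). The key observation is Lemma \ref{le:n-3of2*2*4} with $n=9$: if, after a suitable permutation of rows and of the systems, the first qubit is constant on the first $k$ rows and the second qubit is constant on rows $k+1,\dots,6-m$, while the last two qubits (to be merged into $\bbC^4$) contain $m+1$ pairwise dependent product vectors among the final $m+3$ rows, then $\cS_{A:B:CD}$ is extendible. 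So for each of the ten UPBs I would pick the system to be ``merged'' so that two of the remaining three qubits carry the needed repetitions, reorder the rows to bring those repetitions to the top, and check that the remaining (at most six, when $m=0$) rows leave a product vector free in the $\bbC^4$ factor — i.e. span a subspace of dimension $\le3$ in $\bbC^2\ox\bbC^2$. When $m=0$ this is just the statement that six product vectors in $\bbC^4$ always have a common product orthogonal vector once three of the four ``slots'' are already used up by the $A,B$ repetitions; for UPBs whose repetition pattern is richer one takes $m\ge1$ and uses pairwise-dependence among the merged kets. The size-$6$ and size-$7$ analyses in Section \ref{sec:size67}, and the proof of Lemma \ref{le:n-3of2*2*4} itself, are exactly the template to imitate.

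In the minority of cases where Lemma \ref{le:n-3of2*2*4} does not directly apply in the stated form — for instance if the convenient repetitions sit in the qubit one wants to merge, or are distributed so that conditions (i)–(ii) cannot be met simultaneously — I would fall back on the bare-hands argument already used repeatedly above: locate three rows whose restrictions to the two merged qubits span at most a $3$-dimensional subspace of $\bbC^4$, take $\ket{\ph}\in\bbC^4$ orthogonal to those three, and independently find single-qubit kets on the other two systems orthogonal to the remaining rows (possible because those rows, after deleting the three, have at most $6$ entries in two qubits, and each qubit then needs a vector orthogonal to at most... — one checks the counting row by row). Since each such verification is a finite orthogonality check on explicit $2$- and $4$-dimensional vectors, it is routine.

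\textbf{Main obstacle.} The real work is not conceptual but organizational: the ten UPBs have different repetition structures, so one must, for each, choose the right system to merge and the right $m$, and then perform the (short but genuine) linear-algebra check that the leftover rows really do leave room in $\bbC^4$. I expect the hardest single case to be whichever of the ten has the most ``spread out'' entries — the one closest to having all single-qubit kets pairwise independent — because there Lemma \ref{le:n-3of2*2*4} must be applied with $m=0$ and $k$ as large as possible, leaving the least slack, and one has to verify carefully that exactly six rows remain and that their two-qubit restrictions do not accidentally span all of $\bbC^4$. If any of the ten turns out to resist all these uniform attacks, the fallback is a direct computer-algebra verification, consistent with the fact that the classification in \cite{Johnston2014The} was itself program-assisted; but I anticipate the Lemma \ref{le:n-3of2*2*4} route handles all ten.
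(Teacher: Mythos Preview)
Your plan is essentially the same as the paper's proof: both reduce the $\bbC^4\ox\bbC^4$ case to the $\bbC^2\ox\bbC^2\ox\bbC^4$ case via \eqref{eq:inclusion=s}, and both dispatch the latter by applying Lemma~\ref{le:n-3of2*2*4} to each of the ten UPBs. The paper streamlines the bookkeeping you describe by observing that, up to equivalence, every (UPB, coarse graining) pair among the first ten falls into one of three structural templates---namely $(k,m)=(4,0)$, $(3,0)$, or $(3,1)$ in the notation of Lemma~\ref{le:n-3of2*2*4}---so no ``bare-hands'' fallback is ever needed. One small correction: in your $m=0$ discussion, the point is not that ``six product vectors in $\bbC^4$ always have a common product orthogonal vector,'' but simply that the \emph{three} leftover vectors $\ket{h_j,i_j}$ span at most a $3$-dimensional subspace of $\bbC^4$, so some (not necessarily product) $\ket{\ph}\in\bbC^4$ is orthogonal to all three; combined with $\ket{f_1'}$ and $\ket{g_{k+1}'}$ on the first two factors this gives the witnessing product vector.
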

\begin{proof}
Suppose $\cS_{A:B:C:D}$=\{$\ket{f_1,g_1,h_1,i_1}$, $\ket{f_2,g_2,h_2,i_2}$,...,$\ket{f_9,g_9,h_9,i_9}$\} is a 4-qubit UPB of size 9. We write the UOMs of the first ten UPBs of size 9 as the matrices $U_1, U_2,...,U_{10}$. By observation, one can find that all these matrices can be classified into three categories.
In the first one, there are the two columns, one of which has four identical vectors and another has two identical vectors in remaining rows.  
In the second one, there are the two columns, one of which has three identical vectors and another has three identical vectors in remaining rows. 
 In the last one, there are the two columns, one of which has three identical vectors and another has two identical vectors in remaining rows. Besides, there are two linearly dependent vectors in other rows of the remaining two columns.

  Up to equivalence of UPBs, the three categories matrices respectively has the same structure as the following three matrices in \eqref{eq:10UPBsofsize9}. 

\begin{eqnarray}
\label{eq:10UPBsofsize9}
\bma
0&*&*&*\\
0&*&*&*\\
0&*&*&*\\
0&*&*&*\\
*&0&*&*\\
*&0&*&*\\
*&*&*&*\\
*&*&*&*\\
*&*&*&*\\
\ema,
\quad
\bma
0&*&*&*\\
0&*&*&*\\
0&*&*&*\\
*&0&*&*\\
*&0&*&*\\
*&0&*&*\\
*&*&*&*\\
*&*&*&*\\
*&*&*&*\\
\ema,
\quad
\bma
0&*&*&*\\
0&*&*&*\\
0&*&*&*\\
*&0&*&*\\
*&0&*&*\\
*&*&0&d\\
*&*&0&d\\
*&*&*&*\\
*&*&*&*\\
\ema
\end{eqnarray}

 Considering $U_1$, we have $\ket{f_1}=\ket{f_5}=\ket{f_6}=\ket{f_8}=\ket{0}$ and $\ket{g_3}=\ket{g_7}=\ket{a'}$ and m=0 in lemma \ref{le:n-3of2*2*4}. So $\cS_{A:B:CD}$ is no longer a UPB from lemma \ref{le:n-3of2*2*4}. Using arguments similar to $U_1$, one can find that neither of these UPBs is a UPB in the coarse graining $\bbC^2\ox\bbC^2\ox\bbC^4$. Certainly, neither of them is a UPB in the coarse graining $\bbC^4\ox\bbC^4$ from lemma \ref{le:n,n-1}.

\end{proof}

\begin{lemma}
\label{le:size9,family11}
(i) Suppose $\cS_{A:B:C:D}$ is the 11'th UPB of size $9$ in \cite[Table 1]{Johnston2014The}. Then $\cS_{AB:CD}$, $\cS_{AC:BD}$ and $\cS_{AD:BC}$ are  simultaneously UPBs or not.

(ii) $\cS_{AB:CD}$ and $\cS_{CD:BA}$ are the same up to row permutation and product unitary transformation. 
\end{lemma}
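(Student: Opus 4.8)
The plan is to treat both parts as symmetry statements about the single $4$-qubit UPB $\cS_{A:B:C:D}$ that occurs as the $11$'th size-$9$ entry of \cite[Table 1]{Johnston2014The}. First I would write this UPB out explicitly as a $9\times4$ UOM, recording the auxiliary orthonormal bases ($\ket{a},\ket{a'},\ket{b},\ket{b'},\dots$) appearing in its columns. The whole argument then rests on locating permutations of the four qubit labels $A,B,C,D$ that fix this UOM up to UPB-equivalence, i.e.\ up to a row permutation together with local $2\times2$ unitaries $U_A,U_B,U_C,U_D$, in the sense of the two bullet properties of Sec.~\ref{subsec:qi}. Any such qubit-permutation symmetry that respects a bipartition of $\{A,B,C,D\}$ into two pairs descends, on the corresponding $\bbC^4\ox\bbC^4$ coarse graining, to a product unitary $(U_A\ox U_B)\ox(U_C\ox U_D)$, possibly an interchange of the two $\bbC^4$ factors, and a row reordering --- and none of these changes whether a list of product vectors in $\bbC^4\ox\bbC^4$ is a UPB.

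For part (ii), I would check that the $11$'th UPB is invariant, up to such an equivalence, under the qubit permutation that exchanges the blocks $AB$ and $CD$ (that is, $A\leftrightarrow C$, $B\leftrightarrow D$), exhibiting the row permutation and the $2\times2$ unitaries that realize it. This invariance says precisely that $\cS_{CD:AB}$ equals $\cS_{AB:CD}$ up to a row permutation and the product unitary built from those $2\times2$ unitaries. Composing with the swap $\ket{g,f}\mapsto\ket{f,g}$ inside the $\cH_B\ox\cH_A$ factor --- a unitary on one of the two $\bbC^4$ parties, hence itself a product unitary on $\bbC^4\ox\bbC^4$ --- converts $\cS_{CD:AB}$ into $\cS_{CD:BA}$, and (ii) follows. (If one adopts the convention that a product unitary on a bipartite system of two equal-dimensional parties may also interchange the two parties, then (ii) is immediate for every $4$-qubit UPB by directly matching the two lists of product vectors; I would record this as the quick route and keep the symmetry argument as the careful one.)

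For part (i), the objective is to show the three bipartite coarse grainings $\cS_{AB:CD}$, $\cS_{AC:BD}$, $\cS_{AD:BC}$ are simultaneously UPBs. Recall that $S_4$ maps onto the symmetric group of the three unordered bipartitions $\{A,B\}|\{C,D\}$, $\{A,C\}|\{B,D\}$, $\{A,D\}|\{B,C\}$, with kernel the Klein four-group; the block-exchange used in (ii) lies in that kernel and so fixes each bipartition. Hence I would hunt for further symmetries of the $11$'th UPB --- concretely, a $3$-cycle on three of the four qubits, or a pair of transpositions drawn from two different classes (for instance $(A\,B)$ and $(B\,C)$) --- whose images generate a \emph{transitive} subgroup of that symmetric group of bipartitions, exhibiting in each case the explicit row permutation and $2\times2$ unitaries. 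Given such symmetries, whenever a qubit permutation carries the bipartition $\beta$ to $\beta'$, applying the corresponding equivalence to $\cS_{A:B:C:D}$ and then regrouping according to $\beta$ exhibits $\cS_{\beta'}$ as equal to $\cS_{\beta}$ up to a product unitary, a possible interchange of the two $\bbC^4$ factors, and a row permutation; so $\cS_{\beta}$ is a UPB iff $\cS_{\beta'}$ is, and transitivity across the three bipartitions yields (i).

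The main obstacle is purely computational: producing the $11$'th UPB in closed form, then, for each qubit permutation I want to use, finding the matching row permutation and the four $2\times2$ unitaries --- the bookkeeping over the several nonorthogonal auxiliary bases is exactly where mistakes are easy to make. A secondary check is to confirm that the symmetries I extract really do act transitively on the three bipartitions, rather than merely fixing $\{A,B\}|\{C,D\}$ (as the block-exchange alone does). Everything here is a finite verification, and (i) together with (ii) will then be used in the proof of Proposition~\ref{pp:size9} to cut down the coarse grainings that must be tested by hand.
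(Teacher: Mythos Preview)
Your proposal is correct and follows essentially the same strategy as the paper: both parts are proved by exhibiting explicit qubit-permutation symmetries of the $11$'th UPB (up to row permutation and local $2\times2$ unitaries) and observing that these descend to equivalences of the $\bbC^4\ox\bbC^4$ coarse grainings. The paper executes exactly this plan --- for (i) it uses the $3$-cycle on columns $B,C,D$ (after a preliminary local unitary) to cycle the three bipartitions, and for (ii) it reorders the columns to $(C,D,B,A)$ and verifies equality with the original UOM via an explicit chain of row permutations, a column swap, and one local unitary --- so your outline and the paper's proof differ only in that the paper carries out the concrete bookkeeping you identify as the main obstacle.
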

\begin{proof}
(i) It suffices to show that $\cS_{AB:CD}$, $\cS_{AC:BD}$ and $\cS_{AD:BC}$ are equivalent. 

We write the UOM of 11'th UPB $U$ of size $9$ in \cite[Table 1]{Johnston2014The} as the first matrix $U_1$ in \eqref{eq:size9,family11}. The remaining matrices in \eqref{eq:size9,family11} are respectively denoted as $U_2,U_3$ and $U_4$. 
We convert $U_1$ into $U_2$ by  switching $0$ and $1$ in the third column of $U_1$, and $(0,1)\lra (a,a')$ in the fourth column of $U_1$. They can be realized by performing product unitary transformation on the third and fourth qubit of the UPB $U$. Next, we convert $U_2$ into $U_3$ by switching column $2,3,4$ of $U_2$ into its column $4,2,3$.
Finally, we obtain $U_4$ by permuting row $1,2,3$ of $U_3$, and permuting row $4,5,6$ of $U_3$, respectively. 

So $U_2$ and $U_4$ are the same. The switching in the last paragraph shows that $\cS_{AB:CD}$ and $\cS_{AC:DB}$ are  simultaneously UPBs or not. One can similarly prove that $\cS_{AB:CD}$ and $\cS_{AD:BC}$ are  simultaneously UPBs or not. 

We have shown that $\cS_{AB:CD}$, $\cS_{AC:BD}$ and $\cS_{AD:BC}$ are equivalent. So the assertion holds.

\begin{eqnarray}
\label{eq:size9,family11}
\bma
0&0&0&0\\
0&1&a&a\\
0&a&1&a'\\
1&1&1&0\\
1&a&0&a\\
1&0&a&a'\\
a&0&1&a\\
a&1&0&a'\\
a'&a'&a'&1\\	
\ema
\ra 
\bma
0&0&1&a\\
0&1&a&0\\
0&a&0&1\\
1&1&0&a\\
1&a&1&0\\
1&0&a&1\\
a&0&0&0\\
a&1&1&1\\
a'&a'&a'&a'\\	
\ema
\ra\bma
0&1&a&0\\
0&a&0&1\\
0&0&1&a\\
1&0&a&1\\
1&1&0&a\\
1&a&1&0\\
a&0&0&0\\
a&1&1&1\\
a'&a'&a'&a'\\	
\ema
\ra 
\bma
0&0&1&a\\
0&1&a&0\\
0&a&0&1\\
1&1&0&a\\
1&a&1&0\\
1&0&a&1\\
a&0&0&0\\
a&1&1&1\\
a'&a'&a'&a'\\	
\ema.
\end{eqnarray}

(ii) We put down the last matrix in \eqref{eq:size9,family11} as the first matrix in \eqref{eq:size9,family11-1}. We name the matrices in \eqref{eq:size9,family11-1} as $U_1,...,U_6$, respectively. We obtain $U_2$ by switching columns $1,2$ and $3,4$ of $U_1$.
We obtain $U_3$ by switching the rows of $U_2$. We obtain $U_4$ by switching column $3$ and $4$ of $U_3$. We obtain $U_5$ by switching the symbols $0$ and $1$ in column $3$ of $U_4$. Finally we obtain $U_6$ by switching row $4,5,6$ of $U_5$. 
\begin{eqnarray}
\label{eq:size9,family11-1}
\bma
0&0&1&a\\
0&1&a&0\\
0&a&0&1\\
1&1&0&a\\
1&a&1&0\\
1&0&a&1\\
a&0&0&0\\
a&1&1&1\\
a'&a'&a'&a'\\	
\ema
\ra	
\bma
1&a&0&0\\
a&0&0&1\\
0&1&0&a\\
0&a&1&1\\
1&0&1&a\\
a&1&1&0\\
0&0&a&0\\
1&1&a&1\\
a'&a'&a'&a'\\	
\ema
\ra	
\bma
0&0&a&0\\
0&1&0&a\\
0&a&1&1\\
1&0&1&a\\
1&a&0&0\\
1&1&a&1\\
a&0&0&1\\
a&1&1&0\\
a'&a'&a'&a'\\	
\ema
\ra	
\bma
0&0&0&a\\
0&1&a&0\\
0&a&1&1\\
1&0&a&1\\
1&a&0&0\\
1&1&1&a\\
a&0&1&0\\
a&1&0&1\\
a'&a'&a'&a'\\	
\ema
\ra	
\bma
0&0&1&a\\
0&1&a&0\\
0&a&0&1\\
1&0&a&1\\
1&a&1&0\\
1&1&0&a\\
a&0&0&0\\
a&1&1&1\\
a'&a'&a'&a'\\	
\ema
\ra
\bma
0&0&1&a\\
0&1&a&0\\
0&a&0&1\\
1&1&0&a\\
1&a&1&0\\
1&0&a&1\\
a&0&0&0\\
a&1&1&1\\
a'&a'&a'&a'\\	
\ema.	
\end{eqnarray}
\end{proof}

Based on the above two lemmas, we present the main result of this section.

\begin{proposition}
\label{pp:size9}
Suppose $\cS_{A:B:C:D}$ is the 11'th UPB of size $9$ in \cite[Table 1]{Johnston2014The}, and its UOM is the first matrix in \eqref{eq:size9,family11-1}. Then 
\begin{eqnarray}
\label{eq:siza9=224}	
&&
\cU_{2,2,4}^9=\{\cS_{A:B:CD}, \cS_{A:C:BD}, \cS_{A:D:BC}, \cS_{B:C:AD}, \cS_{B:D:AC}, \cS_{C:D:AB}\},
\\&&\label{eq:siza9=44}
\cU_{4,4}^9=\{\cS_{AB:CD},\cS_{AC:BD},\cS_{AD:BC}\}.
\end{eqnarray}
\end{proposition}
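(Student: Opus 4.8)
The plan is to route everything through the 11'th UPB using Lemma~\ref{le:size9,family1-10}, carry out one genuine unextendibility computation by hand for $\cS_{AB:CD}$ in $\bbC^4\ox\bbC^4$, and then read off the remaining eight cases from Lemma~\ref{le:size9,family11}, Lemma~\ref{le:n,n-1} and the implication \eqref{eq:abcd}. By Johnston's classification, the 4-qubit UPBs of size $9$ are, up to equivalence, the eleven listed in \cite[Table 1]{Johnston2014The}, and Lemma~\ref{le:size9,family1-10} shows the first ten are not UPBs in any $\bbC^2\ox\bbC^2\ox\bbC^4$ or $\bbC^4\ox\bbC^4$ coarse graining. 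Hence every element of $\cU_{2,2,4}^9$ (resp.\ $\cU_{4,4}^9$) is a coarse graining of the 11'th UPB $\cS_{A:B:C:D}$ obtained by merging exactly two of the four qubits, so the candidates are precisely the six sets of \eqref{eq:siza9=224} and the three sets of \eqref{eq:siza9=44}; it remains to show that all nine of these really are UPBs.

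The main step is to prove $\cS_{AB:CD}\in\cU_{4,4}^9$. Orthogonality of its nine product vectors is inherited from $\cS_{A:B:C:D}$, so only unextendibility needs work. Suppose $\ket{x}\ox\ket{y}\neq 0$ with $\ket{x},\ket{y}\in\bbC^4$ is orthogonal to all nine rows of the first matrix in \eqref{eq:size9,family11-1}, grouping qubits $A,B$ into the first $\bbC^4$ and qubits $C,D$ into the second. Set $I=\{i:\ket{x}\perp(AB)_i\}$ and $J=\{i:\ket{y}\perp(CD)_i\}$, so that $I\cup J=\{1,\dots,9\}$ and hence $\max(\abs I,\abs J)\ge 5$. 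Assume $\abs I\ge 5$; the case $\abs J\ge 5$ is symmetric, since by Lemma~\ref{le:size9,family11}(ii) swapping the two $\bbC^4$ factors of $\cS_{AB:CD}$ gives an equivalent UPB, so the nine $CD$-vectors have the same linear-dependence structure as the nine $AB$-vectors. Now $\abs I\ge 5$ means the hyperplane $\ket{x}^\perp\subset\bbC^4$ contains at least five of the nine vectors $\ket{0,0},\ket{0,1},\ket{0,a},\ket{1,1},\ket{1,a},\ket{1,0},\ket{a,0},\ket{a,1},\ket{a',a'}$. The three with first qubit $\ket{0}$ span $\ket{0}\ox\bbC^2$, the three with first qubit $\ket{1}$ span $\ket{1}\ox\bbC^2$, the two with first qubit $\ket{a}$ span $\ket{a}\ox\bbC^2$, and any two of these three $2$-planes already span $\bbC^4$, so no hyperplane contains two of them. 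A short case analysis then shows that a hyperplane of $\bbC^4$ meets the nine vectors in at most five of them, and in exactly five only for $\ket{x}\in\{\ket{1,0},\ket{1,1},\ket{0,1},\ket{0,0}\}$. In each of these four cases $I$ is forced (for instance $\ket{x}=\ket{1,0}$ forces $I=\{1,2,3,4,8\}$), hence the four rows outside $I$ are forced, and a direct check shows their $CD$-parts (for this example $\ket{1,0},\ket{a,1},\ket{0,0},\ket{a',a'}$) are linearly independent and therefore span $\bbC^4$; this uses only that $\ket{0},\ket{1},\ket{a},\ket{a'}$ are pairwise linearly independent, which holds because $\cS_{A:B:C:D}$ is a UPB. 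So no nonzero $\ket{y}$ is orthogonal to those four $CD$-parts, contradicting the assumed $\ket{x}\ox\ket{y}$. Therefore $\cS_{AB:CD}$ is a UPB in $\bbC^4\ox\bbC^4$.

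Given $\cS_{AB:CD}\in\cU_{4,4}^9$, Lemma~\ref{le:size9,family11}(i) yields $\cS_{AC:BD},\cS_{AD:BC}\in\cU_{4,4}^9$, and with the reduction step this proves \eqref{eq:siza9=44}. Applying \eqref{eq:abcd} to each of the three bipartitions, and permuting qubits as allowed by the first property of Section~\ref{subsec:qi}, gives $\cS_{A:B:CD},\cS_{C:D:AB},\cS_{A:C:BD},\cS_{B:D:AC},\cS_{A:D:BC},\cS_{B:C:AD}\in\cU_{2,2,4}^9$, which are the six sets of \eqref{eq:siza9=224}; with the reduction step this proves \eqref{eq:siza9=224}.

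The only place requiring real effort is the unextendibility check for $\cS_{AB:CD}$, and inside it the main obstacle is the hyperplane enumeration: one must be sure that \emph{all} hyperplanes of $\bbC^4$ meeting the nine $AB$-vectors in five points have been listed, and that for each the complementary quadruple of $CD$-vectors spans $\bbC^4$. Once that finite verification is done, the rest is formal, resting on Lemmas~\ref{le:n,n-1}, \ref{le:size9,family1-10}, \ref{le:size9,family11} and the implication \eqref{eq:abcd}.
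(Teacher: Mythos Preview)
Your proof is correct and follows essentially the same route as the paper's: reduce to the 11'th family via Lemma~\ref{le:size9,family1-10}, prove $\cS_{AB:CD}$ is a UPB by finding all $5$-subsets of the nine $AB$-vectors that span only a $3$-space and checking that in each case the complementary four $CD$-vectors are independent, and then use Lemma~\ref{le:size9,family11} and \eqref{eq:abcd} to obtain the remaining eight cases. The paper reaches the same four exceptional $5$-subsets $\{1,2,3,4,8\},\{1,2,3,6,7\},\{1,4,5,6,7\},\{2,4,5,6,8\}$ by a combinatorial classification (grouping the $\binom{9}{5}$ subsets according to their intersection with $\{u_1,u_2,u_3\}$ and $\{u_4,u_5,u_6\}$), whereas you parameterize the hyperplane by its normal $\ket{x}$ and case-split on the first-qubit component of $\ket{x}$; both analyses are equivalent and yield the same verification of the four complementary $CD$-quadruples.
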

\begin{proof}
We claim that $\cU_{4,4}^9\supseteq \{\cS_{AB:CD},\cS_{AC:BD},\cS_{AD:BC}\}$.
Recall that there are exactly $11$ UPBs of size $9$ by  \cite[Table 1]{Johnston2014The}. So
Lemma \ref{le:size9,family1-10}	 implies that \eqref{eq:siza9=44} holds. 
Then \eqref{eq:siza9=224} holds by \eqref{eq:abcd} and Lemma \ref{le:size9,family1-10}.

In the following we prove the claim.
Let $\cT_{A:B:C:D}$=\{$\ket{f_1,g_1,h_1,i_1}$, $\ket{f_2,g_2,h_2,i_2}$,...,$\ket{f_9,g_9,h_9,i_9}$\} be the 11'th UPB of size $9$. From $U_1$, one can show that any six product vectors in the set $\{\ket{f_1,g_1},\ket{f_2,g_2},...,\ket{f_9,g_9}\}$ span a space $\bbC^4$, and any six product vectors in the set $\{\ket{h_1,i_1},\ket{h_2,i_2},...,\ket{h_9,i_9}\}$ span a space $\bbC^4$.

Suppose $\cT_{AB:CD}$ is not a UPB. Then there is a vector $\ket{\a,\b}\in\cH_{AB}\ox\cH_{CD}$ orthogonal to $\cT_{AB:CD}$. Up to the permutation of subscripts, we can assume that $\ket{\a}$ is orthogonal to $\{\ket{f_1,g_1},\ket{f_2,g_2},...,\ket{f_m,g_m}\}$, and $\ket{\b}$ is orthogonal to $\{\ket{h_{m+1},i_{m+1}},\ket{h_{m+2},i_{m+2}},...,\ket{h_9,i_9}\}$.
The fact in the last paragraph shows that $m=4$ or $5$. Using Lemma \ref{le:size9,family11}, we only need to prove the assertion for $m=5$.

We write $\ket{f_1,g_1},\ket{f_2,g_2},...,\ket{f_9,g_9}$ as the 4-dimensional vector $u_1$,$u_2$,...,$u_9$. One can verify that $u_1$, $u_2$, $u_3$ span a space $\bbC^2$, so do $u_4$, $u_5$, $u_6$. Let $\{T_i\}_{i=1}^{\binom{9}{5}}$ be the collections of any five vectors among $u_1$, $u_2$,...,$u_9$. In order to judge linear dependence of the vectors in $T_i$, we divide all $T_i$ for $i=1,...,\binom{9}{5}$ into three classes $\{T_i\}_{\land_1}$, $\{T_i\}_{\land_2}$, $\{T_i\}_{\land_3}$. If $T_i$ includes $u_1$, $u_2$, $u_3$ (or $u_4$, $u_5$, $u_6$) and at most two of $u_4$, $u_5$, $u_6$ (or $u_1$, $u_2$, $u_3$), then $T_i\in\{T_i\}_{\land_1}$. If $T_i$ includes two of $u_1$, $u_2$, $u_3$ (or $u_4$, $u_5$, $u_6$) and at most two of $u_4$, $u_5$ $u_6$ (or $u_1$, $u_2$, $u_3$), then $T_i\in\{T_i\}_{\land_2}$. If $T_i$ includes one of $u_1$, $u_2$, $u_3$ (or $u_4$, $u_5$, $u_6$) and one of $u_4$, $u_5$ $u_6$ (or $u_1$, $u_2$, $u_3$), then $T_i\in\{T_i\}_{\land_3}$.
 
For $T_i\in \{T_i\}_{\land_1}$, there are three cases, the three vectors $u_1$, $u_2$, $u_3$ (or $u_4$, $u_5$, $u_6$) and two of $u_7$, $u_8$, $u_9$, the three $u_1$, $u_2$, $u_3$ (or $u_4$, $u_5$, $u_6$) and two of  $u_4$, $u_5$, $u_6$ (or $u_1$, $u_2$, $u_3$),  the three $u_1$, $u_2$, $u_3$ (or $u_4$, $u_5$, $u_6$) and one of  $u_4$, $u_5$, $u_6$ (or $u_1$, $u_2$, $u_3$) and one of $u_7$, $u_8$, $u_9$. One can verify that $T_i$ in the first two cases span a space $\bbC^4$. In the last case, the four sets $T_i=\{u_1, u_2, u_3, u_4, u_8\}$, $T_j=\{u_1, u_2, u_3, u_6, u_7\}$, $T_k=\{u_1, u_4, u_5, u_6, u_7\}$, $T_l=\{u_2, u_4, u_5, u_6, u_8\}$ all span a space $\bbC^3$ and others span a space $\bbC^4$.

For $T_i\in \{T_i\}_{\land_2}$, there are three cases, two of $u_1$, $u_2$, $u_3$ (or $u_4$, $u_5$, $u_6$) and three of $u_7$, $u_8$, $u_9$, two of $u_1$, $u_2$, $u_3$ (or $u_4$, $u_5$, $u_6$) and one of  $u_4$, $u_5$, $u_6$ (or $u_1$, $u_2$, $u_3$) and two of $u_7$, $u_8$, $u_9$, two of $u_1$, $u_2$, $u_3$ (or $u_4$, $u_5$, $u_6$) and two of  $u_4$, $u_5$, $u_6$ (or $u_1$, $u_2$, $u_3$) and one of $u_7$, $u_8$, $u_9$. One can verify that $T_i$ span a space $\bbC^4$. 

For $T_i\in \{T_i\}_{\land_3}$, there are only a case, one of $u_1$, $u_2$, $u_3$ (or $u_4$, $u_5$, $u_6$) and one of  $u_4$, $u_5$, $u_6$ (or $u_1$, $u_2$, $u_3$) and the three $u_7$, $u_8$, $u_9$. Evidently, $T_i$ span a space $\bbC^4$.

So we need to investigate $T_i$, $T_j$, $T_k$, $T_l$. We express them as the submatrices of corresponding UOMs 
\begin{eqnarray}
\bma
0&0\\
0&1\\
0&a\\
1&1\\
a&1\\	
\ema,
\quad	
\bma
0&0\\
0&1\\
0&a\\
1&0\\
a&0\\	
\ema,
\quad
\bma
0&0\\
1&1\\
1&a\\
1&0\\
a&0\\
\ema,
\quad	
\bma
0&1\\
1&1\\
1&a\\
1&0\\
a&1\\	
\ema.
\end{eqnarray}
Their cofactors are respectively 
\begin{eqnarray}
\label{eq:non}
\bma
0&a\\
a&a'\\
1&a\\
a'&1\\	
\ema,
\quad	
\bma
1&0\\
0&a\\
0&a'\\
a'&1\\	
\ema,
\quad
\bma
a&a\\
1&a'\\
0&a'\\
a'&1\\
\ema,
\quad	
\bma
0&0\\
1&a'\\
1&a\\
a'&1\\	
\ema.
\end{eqnarray}
 By observing each of the four cases, we can obtain that there is no product vector in $\cH_{AB}:\cH_{CD}$ orthogonal to $\cU^9_{4,4}$.
\end{proof}

\section{The construction and entanglement of 4-qubit positive-partal-transpose entangled states} 
\label{sec:ent}

In this section we present two main results as the application of the previous section. First we construct the 4-qubit PPT entangled state $\r$ of rank seven in \eqref{eq:rho}. Second we investigate the geometric measure of entanglement of a special $\r$ in Theorem \ref{thm:gme}. As far as we know, such a state have been little studied due to the mathematical difficulty. Using \eqref{eq:rho}, we shall show that the state $\r$ is also a $2\times2\times4$ and $4\times4$ positive-partial-transpose (PPT) entangled state of rank seven in terms the partition of systems $A:B:CD$ and $AB:CD$. The state is constructed from the UPBs by UOMs in \eqref{eq:size9,family11-1} in the last section.

Recall that the vectors in different columns of the first matrix in \eqref{eq:size9,family11-1} are different, though we name all of them as $a,a'$ for convenience. To distinguish them in the UPBs, we rename them as $\ket{a},\ket{a'}$, $\ket{b},\ket{b'}$, $\ket{c},\ket{c'}$ and $\ket{d},\ket{d'}$, respectively. Their general expressions are in this form $\cos\a\ket{0}+e^{i\theta}\sin\a\ket{1}$. However we can simplify it by performing a diagonal unitary matrix $\diag(1,e^{-i\theta})$ on the above qubits. Then $\cos\a\ket{0}+e^{i\theta}\sin\a\ket{1}$ becomes $\cos\a\ket{0}+\sin\a\ket{1}$. We still name them as $\ket{a},\ket{a'}$, $\ket{b},\ket{b'}$, $\ket{c},\ket{c'}$ and $\ket{d},\ket{d'}$, respectively. They are all orthonormal basis in $\bbC^2$ with the following expressions.
\begin{eqnarray}
\label{eq:abcd}	
&&
\ket{a}=\cos\a\ket{0}+\sin\a\ket{1},
\quad\quad\quad
\ket{a'}=\sin\a\ket{0}-\cos\a\ket{1},
\notag\\&&
\ket{b}=\cos\b\ket{0}+\sin\b\ket{1},
\quad\quad\quad
\ket{b'}=\sin\b\ket{0}-\cos\b\ket{1},
\notag\\&&
\ket{c}=\cos\g\ket{0}+\sin\g\ket{1},
\quad\quad\quad
\ket{c'}=\sin\g\ket{0}-\cos\g\ket{1},
\notag\\&&
\ket{d}=\cos\d\ket{0}+\sin\d\ket{1},
\quad\quad\quad
\ket{d'}=\sin\d\ket{0}-\cos\d\ket{1},
\end{eqnarray}
and $\a,\b,\g,\d\in(0,\p/2)$.
Hence the UOM becomes
\begin{eqnarray}
\bma
0&0&1&a\\
0&1&a&0\\
0&a&0&1\\
1&1&0&a\\
1&a&1&0\\
1&0&a&1\\
a&0&0&0\\
a&1&1&1\\
a'&a'&a'&a'\\	
\ema
\ra
\bma
0&0&1&d\\
0&1&c&0\\
0&b&0&1\\
1&1&0&d\\
1&b&1&0\\
1&0&c&1\\
a&0&0&0\\
a&1&1&1\\
a'&b'&c'&d'\\	
\ema.	
\end{eqnarray}
Let
\begin{eqnarray}
&&
x_1=\sin\b\sin\g\sin\d
\quad\quad
x_5=\cos\b\cos\g\cos\d,
\\&&
x_2=
\cos\a\cos\g\sin\d,
\quad\quad
x_6=
\sin\a\sin\g\cos\d,
\\&&
x_3=
\cos\a\sin\b\cos\d,
\quad\quad
x_7=
\sin\a\cos\b\sin\d,
\\&&
x_4=
\cos\a\cos\b\sin\g,
\quad\quad
x_8=
\sin\a\sin\b\cos\g,
\end{eqnarray}
and
\begin{eqnarray}
[u_{ij}]:=	
\bma
x_1&x_2&x_3&x_4&-x_5&-x_6&-x_7&-x_8\\
x_2&-x_1&-x_4&x_3&x_6&-x_5&-x_8&x_7\\
x_3&x_4&-x_1&-x_2&x_7&x_8&-x_5&-x_6\\
x_4&-x_3&x_2&-x_1&x_8&-x_7&x_6&-x_5\\
-x_5&-x_6&-x_7&-x_8&-x_1&-x_2&-x_3&-x_4\\
-x_6&x_5&-x_8&x_7&x_2&-x_1&x_4&-x_3\\
x_7&-x_8&-x_5&x_6&-x_3&x_4&x_1&-x_2\\
x_8&x_7&-x_6&-x_5&-x_4&-x_3&x_2&x_1\\
\ema.
\end{eqnarray}
One can verify that $[u_{ij}]$ is an $8\times8$ real unitary matrix. By using the map $\cH_A\otimes\cH_B\ra\bbC^4$ and $\cH_C\otimes\cH_D\ra\bbC^4$, we set $\ket{j,k}:=\ket{2j+k}$ for $j,k=0,1$. So we can define the following 4-qubit pure states as bipartite states in $\bbC^4\otimes\bbC^4$, for $i=1,2,...,8$.
\begin{eqnarray}
\ket{\ps_i}
&&:=u_{i1}\ket{a',0,0,0}	
+
u_{i5}\ket{a',1,1,1}
+
u_{i2}\ket{1,b',1,0}
+
u_{i6}\ket{0,b',0,1}
\notag\\&&
+
u_{i3}\ket{1,0,c',1}
+
u_{i7}\ket{0,1,c',0}
+
u_{i4}\ket{1,1,0,d'}
+
u_{i8}\ket{0,0,1,d'}
\notag\\&&=
u_{i1}\sin\a\ket{00}+
u_{i6}\sin\b\ket{01}+
u_{i8}\sin\d\ket{02}-
u_{i8}\cos\d\ket{03}
\notag\\&&+
u_{i7}\sin\g\ket{10}-
u_{i6}\cos\b\ket{11}-
u_{i7}\cos\g\ket{12}+
u_{i5}\sin\a\ket{13}
\notag\\&&-
u_{i1}\cos\a\ket{20}+
u_{i3}\sin\g\ket{21}+
u_{i2}\sin\b\ket{22}-
u_{i3}\cos\g\ket{23}
\notag\\&&+
u_{i4}\sin\d\ket{30}-
u_{i4}\cos\d\ket{31}-
u_{i2}\cos\b\ket{32}-
u_{i5}\cos\a\ket{33}.
\end{eqnarray}
One can verify that $\ket{\ps_1}=\ket{a',b',c',d'}$. We construct the $4$-qubit PPT entangled state $\r$ of rank seven as follows.
\begin{eqnarray}
\label{eq:rho}
\r &:=&
{1\over7}
(I_{16}
-
\proj{0,0,1,d}-\proj{0,1,c,0}-\proj{0,b,0,1}
\notag\\&-&
\proj{1,1,0,d}-\proj{1,b,1,0}-\proj{1,0,c,1}
\notag\\&-&
\proj{a,0,0,0}-\proj{a,1,1,1}-\proj{a',b',c',d'})
\notag\\&=&
{1\over7}
(\proj{a',0,0,0}
+\proj{1,b',1,0}
+\proj{1,0,c',1}+\proj{1,1,0,d'}
\notag\\&+&
\proj{a',1,1,1}+\proj{0,b',0,1}
+\proj{0,1,c',0}+\proj{0,0,1,d'}
\notag\\&-&
\proj{a',b',c',d'})
\notag\\&=&
{1\over7}\sum^8_{i=2}\proj{\ps_i}.
\end{eqnarray}
It follows from Lemma \ref{le:size9,family11} that $\r_{AB}$ and $\r_{CD}$ both have rank four. Using Proposition \ref{pp:size9}, $\r_{A:B:CD}$ and $\r_{AB:CD}$ are respectively a $2\times2\times4$ and $4\times4$ PPT entangled state of rank seven. This is the first main result of this section.

In the remaining of this section, we investigate the geometric measure of entanglement of $\r$ in \eqref{eq:rho} \cite{wg2003,cxz2010}. For an $n$-partite quantum state $\s$, the measure is defined as
\begin{eqnarray}
\label{eq:gme}
G(\s):=
-\log_2
\max_{a_1,...,a_n}
\bra{a_1,...,a_n}	
\s
\ket{a_1,...,a_n},
\end{eqnarray}
where $\ket{a_1,...,a_n}$ is a normalized product state in $(\bbC^2)^{\otimes n}$. To evaluate $G(\r)$, we assume that
\begin{eqnarray}
\ket{a_j}
=
\bma
\cos\n_j
\\
e^{i\m_j}\sin\n_j
\ema,	
\end{eqnarray}
where the variables $\m_j\in[0,2\p]$ and $\n_j\in[0,\p/2]$ for $j=1,2,3,4$. Using \eqref{eq:abcd}	we have the constant $\a,\b,\g,\d\in(0,\p/2)$ and 
\begin{eqnarray}
\label{eq:grho}
G(\r)
=
-\log_2
\max_{\m_1,\n_1,...,\m_4,\n_4} g(\m_1,\m_2,\m_3,\m_4,\n_1,\n_2,\n_3,\n_4)	
\end{eqnarray}
where
\begin{eqnarray}
\label{eq:gabcd1}
&&
g(\m_1,\m_2,\m_3,\m_4,\n_1,\n_2,\n_3,\n_4)
\notag\\&=&
\bra{a_1,a_2,a_3,a_4}	
\r
\ket{a_1,a_2,a_3,a_4}	
\notag\\
\label{eq:gabcd2}
&=&
{1\over7}\bigg(
\notag\\&&
(\sin^2\a\cos^2\n_1+\cos^2\a\sin^2\n_1
-2\cos\m_1\sin\a\cos\n_1\cos\a\sin\n_1)
(\cos^2\n_2\cos^2\n_3\cos^2\n_4
+\sin^2\n_2\sin^2\n_3\sin^2\n_4)
\notag\\&+&
(\sin^2\b\cos^2\n_2+\cos^2\b\sin^2\n_2
-2\cos\m_2\sin\b\cos\n_2\cos\b\sin\n_2)
(\sin^2\n_1\sin^2\n_3\cos^2\n_4
+\cos^2\n_1\cos^2\n_3\sin^2\n_4)
\notag\\&+&
(\sin^2\g\cos^2\n_3+\cos^2\g\sin^2\n_3
-2\cos\m_3\sin\g\cos\n_3\cos\g\sin\n_3)
(\sin^2\n_1\cos^2\n_2\sin^2\n_4
+\cos^2\n_1\sin^2\n_2\cos^2\n_4)
\notag\\&+&
(\sin^2\d\cos^2\n_4+\cos^2\d\sin^2\n_4
-2\cos\m_4\sin\d\cos\n_4\cos\d\sin\n_4)(\sin^2\n_1\sin^2\n_2\cos^2\n_3
+\cos^2\n_1\cos^2\n_2\sin^2\n_3)
\notag\\&-&
(\sin^2\a\cos^2\n_1+\cos^2\a\sin^2\n_1
-2\cos\m_1\sin\a\cos\n_1\cos\a\sin\n_1)
\notag\\&&
(\sin^2\b\cos^2\n_2+\cos^2\b\sin^2\n_2
-2\cos\m_2\sin\b\cos\n_2\cos\b\sin\n_2)
\notag\\&&
(\sin^2\g\cos^2\n_3+\cos^2\g\sin^2\n_3
-2\cos\m_3\sin\g\cos\n_3\cos\g\sin\n_3)
\notag\\&&
(\sin^2\d\cos^2\n_4+\cos^2\d\sin^2\n_4
-2\cos\m_4\sin\d\cos\n_4\cos\d\sin\n_4)
\bigg).
\end{eqnarray}
Since $g(\m_1,\m_2,\m_3,\m_4,\n_1,\n_2,\n_3,\n_4)$ is a linear function with $\cos\m_j$, its maximum is achieved when $\cos\m_j=1$ or $-1$ for $j=1,2,3,4$. In this case we have $\cos\m_i\sin\n_j=\sin(\n_j\cos\m_i)$. Recall that $\n_j\in[0,{\p\over2}]$. Letting $\n_j\cos\m_i=\l_j$ we can assume that $\l_j\in[-{\p\over2},{\p\over2}]$ for computing $G(\r)$.

To demonstrate out method, we consider a special $\r$ in \eqref{eq:rho} by choosing
the constant $\a=\b=\g=\d={\p\over4}$. Using the above conditions we can obtain 
\begin{eqnarray}
\label{eq:f1g1+f2g2}
&&
g(\m_1,\m_2,\m_3,\m_4,\n_1,\n_2,\n_3,\n_4)
:=
h(\l_1,\l_2,\l_3,\l_4)
=
{1\over7}(f_1g_1+f_2g_2+f_3g_3+f_4g_4
-f_1f_2f_3f_4),
\end{eqnarray}
and the eight functions
\begin{eqnarray}
\label{eq:f1234def}
&&
f_1=f(\l_1):={1\over2}
-{1\over2}\sin2\l_1,
\notag\\&&
f_2=f(\l_2):={1\over2}
-{1\over2}\sin2\l_2,
\notag\\&&
f_3=f(\l_3):=
{1\over2}
-{1\over2}\sin2\l_3,
\notag\\&&
f_4=f(\l_4):={1\over2}-{1\over2}\sin2\l_4,
\notag\\&&
g_1=g_1(\l_2,\l_3,\l_4):=
\cos^2\l_2\cos^2\l_3\cos^2\l_4
+\sin^2\l_2\sin^2\l_3\sin^2\l_4,	
\notag\\&&
g_2=g_2(\l_1,\l_3,\l_4):=\sin^2\l_1\sin^2\l_3\cos^2\l_4
+\cos^2\l_1\cos^2\l_3\sin^2\l_4,
\notag\\&&
g_3=g_3(\l_1,\l_2,\l_4):=
\sin^2\l_1\sin^2\l_4\cos^2\l_2
+\cos^2\l_1\cos^2\l_4\sin^2\l_2,
\notag\\&&
g_4=g_4(\l_1,\l_2,\l_3):=\sin^2\l_1\sin^2\l_2\cos^2\l_3
+\cos^2\l_1\cos^2\l_2\sin^2\l_3.
\end{eqnarray}
By checking the necessary conditions 
\begin{eqnarray}
\pp{h(\l_1,\l_2,\l_3,\l_4)}{\l_2}=	
\pp{h(\l_1,\l_2,\l_3,\l_4)}{\l_3}=
\pp{h(\l_1,\l_2,\l_3,\l_4)}{\l_4}=0,
\end{eqnarray} 
we obtain that $\l_2=\l_3=\l_4$. Since $h(\l_1,\l_2,\l_2,\l_2)
$ is a linear function with $\sin 2\l_1$, its maximum is achieved when $\sin 2\l_1=-1$ or $1$. When $\sin 2\l_1=1$, the extremum of $h(\l_1,\l_2,\l_2,\l_2)$ is $0$ and $\frac{1}{126}$. When $\sin 2\l_1=-1$, the extremum of $h(\l_1,\l_2,\l_2,\l_2)$ is  ${3\over28}\sqrt{{3\over2}}$. One can derive that the maximum of $h(\l_1,\l_2,\l_3,\l_4)$ is ${3\over28}\sqrt{{3\over2}}\approx0.131$, when $\l_1=-{\p\over4}$ and $\l_2={1\over2}\arcsin{\sqrt6-2\over2}$. It follows from \eqref{eq:grho} and \eqref{eq:f1g1+f2g2} that
\begin{theorem}
\label{thm:gme}
For the 4-qubit PPT entangled state $\r$ in \eqref{eq:rho} with $\a=\b=\g=\d={\p\over4}$, its geometric measure of entanglement is $G(\r)=-\log_2 {3\over28}\sqrt{{3\over2}}\approx2.93$ ebits.	
\end{theorem}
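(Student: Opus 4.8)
The plan is to compute the geometric measure directly from the reduction achieved in \eqref{eq:grho}--\eqref{eq:gabcd2}, specialized to $\a=\b=\g=\d=\p/4$. First I would record the reduction already in place: since $g$ is affine in each $\cos\m_j$, its maximum over $\m_j$ occurs at $\cos\m_j=\pm1$, whence the substitution $\l_j:=\n_j\cos\m_j\in[-\p/2,\p/2]$ collapses the eight variables to four and turns the objective into $h(\l_1,\l_2,\l_3,\l_4)=\tfrac17\bigl(f_1g_1+f_2g_2+f_3g_3+f_4g_4-f_1f_2f_3f_4\bigr)$ with $f_j,g_j$ as in \eqref{eq:f1234def}. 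Thus $G(\r)=-\log_2\max_{\l_1,\dots,\l_4\in[-\p/2,\p/2]}h(\l_1,\l_2,\l_3,\l_4)$, and the whole problem is the unconstrained-on-a-box maximization of $h$.

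Next I would exploit the symmetry of $h$ under permuting $(\l_2,\l_3,\l_4)$ (the roles of $B,C,D$ are interchangeable while $A$ is distinguished because the negative term $-f_1f_2f_3f_4$ is symmetric but the $g_j$'s single out $\l_1$). Setting $\partial h/\partial\l_2=\partial h/\partial\l_3=\partial h/\partial\l_4=0$, I would argue that at an interior critical point one may take $\l_2=\l_3=\l_4=:t$; this is the key structural simplification and I would justify it by showing the three equations force equality (e.g. subtracting pairs of stationarity conditions and factoring). On the boundary of the box the variables hit $\pm\p/2$, where several $f_j$ or $\cos^2\l_j,\sin^2\l_j$ degenerate to $0$ or $1$, and I would check those faces separately (they give small values, dominated by the interior candidate). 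With $\l_2=\l_3=\l_4=t$, $h$ becomes a function $h(\l_1,t)$ that is affine in $\sin2\l_1$, so its $\l_1$-maximum is at $\sin2\l_1=\pm1$, i.e. $\l_1=\pm\p/4$. Plugging in $\l_1=\p/4$ gives $f_1=0$ and the residual is a one-variable function of $t$ with maxima $0$ and $\tfrac1{126}$; plugging in $\l_1=-\p/4$ gives $f_1=1$ and, after setting $f_1=1$, maximizing $\tfrac17\bigl(g_1+g_2+g_3+g_4-f_2f_3f_4\bigr)$ in the single variable $t$ yields the critical equation whose relevant root is $t=\tfrac12\arcsin\tfrac{\sqrt6-2}{2}$, producing the value $\tfrac3{28}\sqrt{3/2}\approx0.131$.

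Finally I would compare the three candidate values $0$, $\tfrac1{126}\approx0.0079$, and $\tfrac3{28}\sqrt{3/2}\approx0.131$, together with the boundary-face values, conclude the global maximum of $h$ is $\tfrac3{28}\sqrt{3/2}$ attained at $(\l_1,\l_2,\l_3,\l_4)=(-\p/4,\,t,\,t,\,t)$ with $t=\tfrac12\arcsin\tfrac{\sqrt6-2}{2}$, and therefore $G(\r)=-\log_2\tfrac3{28}\sqrt{3/2}\approx2.93$ ebits, which is the claim. I expect the main obstacle to be the rigorous justification that the interior critical point must satisfy $\l_2=\l_3=\l_4$ and that no asymmetric interior critical point or boundary face beats the symmetric candidate: the stationarity equations are coupled quartic-type trigonometric relations, and ruling out spurious branches (rather than merely exhibiting one good point) requires care; a clean way around it is to bound $h$ from above by a symmetric surrogate, or to verify the Hessian/second-order conditions at the claimed maximizer and argue the box is small enough that a single interior max plus the checked faces exhausts the possibilities.
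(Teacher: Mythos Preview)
Your proposal follows essentially the same route as the paper's own argument: reduce to $h(\l_1,\l_2,\l_3,\l_4)$ via the affinity in each $\cos\m_j$, use the stationarity conditions $\partial h/\partial\l_2=\partial h/\partial\l_3=\partial h/\partial\l_4=0$ to force $\l_2=\l_3=\l_4$, then exploit the linearity of $h(\l_1,\l_2,\l_2,\l_2)$ in $\sin 2\l_1$ to land at $\l_1=\pm\p/4$, and finally compare the resulting one-variable extrema $0$, $\tfrac1{126}$, and $\tfrac{3}{28}\sqrt{3/2}$. Your explicit mention of the boundary faces and of the need to rule out asymmetric interior critical points is in fact more careful than the paper, which simply asserts that the three partial-derivative conditions yield $\l_2=\l_3=\l_4$ without further discussion; the obstacle you flag is real and is present in the paper's treatment as well.
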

This is the second main result of this section. We have shown  below \eqref{eq:rho} that both $\r_{A:B:CD}$ and $\r_{AB:CD}$ are PPT entangled states. Using the definition of UPBs we have
\begin{eqnarray}
G(\r)
\ge 	
G(\r_{A:B:CD})
\ge 
G(\r_{AB:CD}).
\end{eqnarray}
So Theorem \ref{thm:gme} gives an upper bound of the geometric measure of entanglement of both $\r_{A:B:CD}$ and $\r_{AB:CD}$. By varying the constants $\a,\b,\g,\d$ in \eqref{eq:abcd}, one can similarly investigate the entanglement of more states $\r$ in \eqref{eq:rho}.

\section{Conclusions}
\label{sec:con}

We have applied the classification of 4-qubit UPBs to construct more UPBs. We have shown that there is only one UPB of size $6$ in $\bbC^2\ox\bbC^2\ox\bbC^4$, no UPB of size $7$, six UPBs of size $9$ in $\bbC^2\ox\bbC^2\ox\bbC^4$ and three UPBs of size $9$ in $\bbC^4\ox\bbC^4$. As an application of our results on UPBs of size $9$, we have constructed a family of PPT entangled states $\r$ of rank seven for the systems of $\bbC^2\ox\bbC^2\ox\bbC^2\ox\bbC^2$, $\bbC^2\ox\bbC^2\ox\bbC^4$ and $\bbC^4\ox\bbC^4$ at the same time. Furthermore we have worked out the entanglement of a constant 4-qubit $\r$ using the geometric measure of entanglement. 

Our results have provided better understanding of UPBs in $\bbC^4\ox\bbC^4$ and $\bbC^2\ox\bbC^2\ox\bbC^4$. The next step is to investigate the set $\cU_{2,2,4}^8$. It includes as a subset the set $\cU_{4,4}^8$ we have found in this paper. Primary investigation shows that the inclusion is strict, namely there exist UPBs in the coarse graining $\bbC^2\ox\bbC^2\ox\bbC^4$, though they are not UPBs in the coarse graining $\bbC^2\ox\bbC^4$.

\section*{Acknowledgments}

This work was supported by the NNSF of China (Grant No. 11871089), and the Fundamental Research Funds for the Central Universities (Grant Nos. KG12040501, ZG216S1810 and ZG226S18C1).

\bibliographystyle{unsrt}

\bibliography{channelcontrol}

\end{document}